\documentclass[a4paper,10pt]{article}
\pdfoutput=1

\usepackage{a4wide}

\usepackage[utf8x]{inputenc}
\usepackage{ucs}
\usepackage{amsmath}
\usepackage{amsfonts}
\usepackage{amssymb}
\usepackage{amsthm}
\usepackage{algpseudocode}
\usepackage{algorithm}
\usepackage{enumitem}
\usepackage{tikz}

\algrenewcommand{\algorithmiccomment}[1]{\hfill \mbox{} \hspace*{0cm}\hfill\mbox{\small\emph{#1}}}

\renewcommand\mid{\mathrel{}\middle|\mathrel{}}
\renewcommand\({\left(}
\renewcommand\){\right)}
\renewcommand\{{\left\lbrace}
\renewcommand\}{\right\rbrace}
\newcommand\size[1]{\left\lvert#1\right\rvert}
\newcommand\TV[1]{\left\lVert#1\right\rVert_{\mathrm{TV}}}
\newcommand\N{\mathbb{N}}
\newcommand\Z{\mathbb{Z}}
\newcommand\R{\mathbb{R}}
\renewcommand\P{\mathbf{P}}
\newcommand\E[1]{\mathbf{E}\left[#1\right]}
\newcommand\Q{\mathcal{Q}}
\newcommand\A{\mathcal{A}}
\newcommand\B{\mathcal{B}}
\newcommand\J{\mathcal{J}}
\newcommand\X{\mathcal{X}}
\newcommand\G{\mathcal{G}}
\newcommand\I{\mathcal{I}}
\newcommand\1{\mathbf{1}}
\renewcommand\hat\widehat
\renewcommand\S{\mathcal{S}}
\renewcommand\bar\overline
\newcommand\inter[2]{{[\![#1,#2]\!]}}

\newcommand\word[3]{{#1_{#2 \rightarrow #3}}}
\newcommand\chain[2]{\(#1_#2\)_{#2\in\N}}
\newcommand\schain[3]{\(#1_#2\(#3\)\)_{#2\in\N}}

\newcommand\product[1]{^{\otimes #1}}
\newcommand\Act{\mathfrak A}
\newcommand\Inact{\mathfrak P}

\newcommand\draw[1]{\Call{Draw}{#1}}

\newtheorem{prop}{Property}
\newtheorem{cor}{Corollary}
\newtheorem{thm}{Theorem}
\newtheorem{define}{Definition}

\title{Speeding up Glauber Dynamics for Random Generation of Independent Sets}
\author{R\'{e}mi Varloot\thanks{Microsoft Research - Inria
Joint Centre, France.}  \and Ana Bu\v{s}i\'{c}\thanks{Inria and the Computer Science Dept.\ of \'Ecole Normale Sup\'erieure, Paris, France.}  \and Anne Bouillard\thanks{Computer Science Dept.\ of \'Ecole Normale Sup\'erieure, Paris, France.} }



\begin{document}

\maketitle

{\let\thefootnote\relax\footnotetext{The work presented in this paper has been carried out at
LINCS (www.lincs.fr) and is supported by the French National Research Agency grant ANR-12-MONU-0019.}}

\begin{abstract}

The maximum independent set (MIS) problem is a well-studied combinatorial optimization problem
that naturally arises in many applications, such as wireless communication, information theory and statistical mechanics. 

MIS problem is NP-hard, thus many results in the literature focus on fast generation of maximal independent sets of high cardinality. One possibility is to combine Gibbs sampling with coupling from the past arguments to detect convergence to the stationary regime. This results in a sampling procedure with time complexity that depends on the mixing time of the Glauber dynamics Markov chain.

We propose an adaptive method for random event generation in the Glauber dynamics that considers only the events that are effective in the coupling from the past scheme, accelerating the convergence time of the Gibbs sampling algorithm.

\end{abstract}

\section{Introduction}
\label{sec:intro}


An {\it independent set} of a graph is a set such that no two nodes in the subset are connected by an edge. 
The maximum independent set (MIS) problem is to find the set of mutually nonadjacent nodes with the largest cardinality. MIS is a well-studied combinatorial optimization problem
that naturally arises in many applications, such as statistical physics (where it is known as the hard core gas model) \cite{H04,GalTet06}, information theory \cite{GAY11} and wireless communication \cite{ShahShin12,JiangLNSW12}. 

Generating independent sets is one of the key building blocks of the wireless CSMA  \cite{GhaderiS10,SubramanianA11,ShahShin12,JiangLNSW12}. 
The interference in a wireless network can be modeled by a conflict
graph. The nodes are the links and there is an edge between two nodes if the corresponding 
communication links cannot transmit simultaneously.
At each step of the protocol a set of the communication links is chosen that forms an independent set in this conflict graph. In queue-based CSMA, the nodes have weights that are the queue sizes at the wireless links. 
Ideally, one should compute a maximum weight independent set (MWIS). 
However, MWIS problem is NP-hard  
and hard to approximate in general \cite{Trevisan01}. 
Many papers focus on finding {\it good enough} independent sets, see \cite{ShahShin12} for a more detailed discussion in the context of CSMA. 
In \cite{SSW09} the authors consider a message passing approximation algorithm for MWIS problem. They show that, if initialized using uninformative
messages, their algorithm returns an optimal value if it converges. However, the convergence is proven only for the case of a bipartite graph with a unique MWIS.


The focus of this paper is on the approximations to the MWIS problem using Glauber dynamics 
\cite{Mar99} over the space of independent sets of the interference graph. 
To simplify exposition, much of the
analysis is focused on the special case in which all the weights are equal; extensions to the
completely general case is explained In Section 4 (at the end of Subsection 4.1.). 
Glauber dynamics are defined by a (reversible) Markov process
that has as a stationary distribution a Gibbs measure
$$
\pi(A) = \frac{\lambda^{card(A)}}{Z_{\lambda}},
$$
where $A$ is an independent set of the graph, $\lambda$ is a parameter called fugacity, and $Z_{\lambda}$ is the normalization constant. 
For $\lambda = 1$, this corresponds to the uniform distribution over all independent sets and when 
$\lambda$ goes to infinity, the Gibbs measure is concentrated on the maximum independent sets.   
%
%
%
For high values of $\lambda$, the mixing time of this dynamics becomes prohibitively large. 
Furthermore, the existing bounds for the mixing time of
graphs are limited to the bounded degree case \cite{H04,JiangLNSW12,SubramanianA11}.

%



We combine the Glauber dynamics with the coupling from the past (CFTP) construction for stationarity detection of  Markovian dynamics. CFTP is an exact simulation technique introduced by Propp and Wilson \cite{PW96}. The original algorithm is computationally efficient only under some monotonicity assumptions on 
the Markovian dynamics. 
In the
general case, the CFTP algorithm requires the construction of one trajectory for each
initial condition,
which is computationally intractable in most applications.  
Huber \cite{H04} proposed a more general CFTP algorithm that is based on a construction of a bounding chain that avoids this dependence on the cardinality of the state space. 
However, this comes with a new penalty - 
the running time of the algorithm can be much larger than the mixing time of the original Markovian dynamics. Intuitively, many transitions have no effect on the bounding chain, 
inducing useless steps in the CFTP algorithm. 

The main contribution of this paper is a new CFTP algorithm that 
uses an adaptive event table and avoids generation of events that do not have any effect on the bounding chain. The idea of {\it skipping} passive events is very natural, but it is far from straightforward to see how
this can be combined with the CFTP scheme without introducing a bias. The proof that our algorithm terminates in finite expected time and provides an exact sample from the stationary distribution of the Markovian dynamics 
is stated as our main result in Theorem 2. 

We illustrate the speed-up of Glauber dynamics for the independent sets on a toy example of a star network, for which we can derive bounds for the computation time of our algorithm. We show that, unlike the initial Glauber dynamics, our algorithm does not depend on $\lambda$. 
Similar results are obtained numerically in Section~\ref{sec:numeric} for the Barab\' asi-Albert model \cite{AB02}.
We also compare the proposed algorithm against Dyer and Greenhill dynamics \cite{DG00} that use a {\it swap} operation 
to speed up the convergence.





The paper is organized as follows. An overview of the coupling from
the past construction for the exact sampling from the stationary
distribution of a Markov chain is given in Section~\ref{sec:cftp}. Our
main contribution is presented in Section~\ref{sec:skipping}. We start
in Section~\ref{ssec:oracle} by introducing the idea of active and
passive events and the construction of a dynamic event table that
contains only active events. Section~\ref{ssec:oraclecftp} contains a
detailed discussion on the skipping of events in the CFTP scheme,
summarized in Algorithm 3. The validity of the algorithm is provided
by Theorem 2. Section~\ref{sec:independent} contains the application to independent sets,
while Section 5 concludes the paper.

\section{Coupling From the Past}
\label{sec:cftp}

Throughout this section, $\S$ designates a finite state space.

We study some properties of ergodic Markov chains over $\S$, namely in the case of a joint distribution between multiple Markov chains.

\subsection{Mixing Time}

The mixing time is an indicator of how long it takes a Markov chain to forget its initial distribution. In essence, it measures how long Markov chain Monte Carlo methods must run before being ``close'' to the stationary distribution.

Let $\rho$ and $\pi$ be two probability distributions over $\S$. Recall that the \emph{total variation distance} $\TV{\cdot}$ between the two is defined as $$\TV{\rho-\pi} = \max_{A \subseteq \S} \size{\rho\(A\)-\pi\(A\)}.$$

\begin{define}[Mixing time]
Let $M$ be the transition matrix of an ergodic Markov chain over $\S$, with stationary distribution $\pi$. For all $x\in\S$, let $\schain{X}{i}{x}$ be the Markov chain with transition matrix $M$ and initial state $x$, and, for all $i\in\N$, call $\rho_i^x$ the distribution of $X_i\(x\)$. The \emph{mixing time} $t_{\mathrm{mix}}$ of $M$ is defined as $$t_{\mathrm{mix}} = \min\{ i \in \N \mid \max_{x \in \S} \TV{\rho_i^x-\pi} \leq \frac14\}.$$
\end{define}

Notice that the mixing time is defined for transition matrices rather than Markov chains, as its definition considers \emph{all} Markov chains generated by a given transition matrix.

\begin{prop} \label{thm:mix_lower_bound}
Using the above notation, if, at instant $i \in \N$, there exists an event $A$ and a state $x\in\S$ such that $$\size{\rho_i^x\(A\)-\pi\(A\)} \geq \frac{1}{4},$$ then $i \leq t_{\mathrm{mix}}$.
\end{prop}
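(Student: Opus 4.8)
The plan is to reduce the statement to a single structural fact, namely that the quantity $d(i) = \max_{x \in \S} \TV{\rho_i^x - \pi}$ is non-increasing in $i$, and then to read off the conclusion from the minimality built into the definition of $t_{\mathrm{mix}}$. First I would observe that the hypothesis is merely a lower bound on $d(i)$: since the total variation distance is by definition the maximum of $\size{\rho_i^x(A) - \pi(A)}$ over all events $A$, the existence of a single pair consisting of an event $A$ and a state $x$ with $\size{\rho_i^x(A) - \pi(A)} \ge \frac14$ gives $\TV{\rho_i^x - \pi} \ge \frac14$ for that particular $x$, and hence $d(i) \ge \frac14$. So the whole proposition amounts to showing that $d(i) \ge \frac14$ forces $i \le t_{\mathrm{mix}}$.

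The heart of the argument is the monotonicity of $d$, which I would establish from the contraction of the total variation distance under the transition matrix. Writing $\rho_{i+1}^x = \rho_i^x M$ and using that $\pi$ is stationary, $\pi = \pi M$, I would invoke the standard estimate $\TV{\mu M - \nu M} \le \TV{\mu - \nu}$, valid for any two distributions $\mu, \nu$ and any stochastic matrix $M$. Concretely this follows from the identity $\TV{\mu - \nu} = \frac12 \sum_{s \in \S} \size{\mu(s) - \nu(s)}$ together with $\lVert v M \rVert_1 \le \lVert v \rVert_1$, which holds because the rows of $M$ are probability vectors. Applying this with $\mu = \rho_i^x$ and $\nu = \pi$ yields $\TV{\rho_{i+1}^x - \pi} \le \TV{\rho_i^x - \pi}$ for every $x$, and taking the maximum over $x \in \S$ (the same index set on both sides) gives $d(i+1) \le d(i)$.

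With monotonicity in hand the conclusion is essentially immediate. By definition $t_{\mathrm{mix}}$ is the least index $j$ with $d(j) \le \frac14$, so for every $j < t_{\mathrm{mix}}$ one has $d(j) > \frac14$, while monotonicity gives $d(j) \le d(t_{\mathrm{mix}}) \le \frac14$ for every $j \ge t_{\mathrm{mix}}$. Thus any index $i$ with $d(i) > \frac14$ satisfies $i < t_{\mathrm{mix}}$, hence certainly $i \le t_{\mathrm{mix}}$. I expect the contraction estimate to be the only genuine content, but it is entirely standard; the one delicate point is the boundary value $d(i) = \frac14$, where monotonicity alone yields merely $d(i) \le d(t_{\mathrm{mix}}) \le \frac14$ and so does not by itself rule out an index $i > t_{\mathrm{mix}}$ sitting exactly at the threshold. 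This degenerate plateau of $d$ at the value $\frac14$ cannot occur once the witnessed gap is strictly above $\frac14$, in which case the clean strict inequality $i < t_{\mathrm{mix}}$ holds; carefully disposing of the equality case is the single step I would flag as requiring attention.
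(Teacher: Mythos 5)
You should know at the outset that the paper never proves this Property: it is stated as a standard fact from the theory of mixing times and immediately used as a black box (``this last property serves to derive lower-bounds\dots''), so there is no in-paper argument to compare yours against. Judged on its own terms, your proof is the standard one and is correct in substance. The reduction of the hypothesis to $d(i) = \max_{x\in\S}\TV{\rho_i^x-\pi} \geq \frac14$ is exactly right; the contraction estimate $\TV{\mu M - \nu M} \leq \TV{\mu-\nu}$ is the correct key lemma, and your derivation of it from the identity $\TV{\mu-\nu} = \frac12\sum_{s\in\S}\size{\mu(s)-\nu(s)}$ together with the $\ell^1$-contractivity of a stochastic matrix is sound; monotonicity of $d$ combined with the minimality built into the definition of $t_{\mathrm{mix}}$ then delivers $i < t_{\mathrm{mix}}$ whenever $d(i) > \frac14$.

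Your flag on the boundary case $d(i) = \frac14$ is warranted, and it is worth being precise about why: monotonicity cannot rule out an index $i > t_{\mathrm{mix}}$ sitting exactly at the threshold, and in fact no argument can without additional hypotheses. The one-step contraction can hold with equality --- ergodicity forces $d(i)\to 0$ eventually, but it is compatible with the chain containing deterministic ``pipeline'' stretches along which the total variation distance to $\pi$ is merely translated, not shrunk, for finitely many steps --- so $d$ can plateau, and a plateau pinned at exactly the value $\frac14$ and extending past $t_{\mathrm{mix}}$ would falsify the literal statement. In other words, the equality case is a degenerate imprecision in the Property as stated rather than a gap in your argument: read with strict inequality in the hypothesis, which is the only form one ever needs to derive lower bounds on mixing times (and the only way the paper uses it), your proof is complete. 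One small criticism: your closing claim that the plateau ``cannot occur once the witnessed gap is strictly above $\frac14$'' is not a disposal of the equality case but merely a restatement of the strict case; the honest resolution is to replace $\geq$ by $>$ in the hypothesis, not to hope the plateau can be excluded.
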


This last property serves to derive lower-bounds on the mixing time of certain transition matrices.

\subsection{Coupling Time}

Just like the mixing time allows us to measure how quickly Markov chains converge towards their stationary distributions, the coupling time measures how long it takes before two or more Markov chains ``meet'' in a same state.

\begin{define}[Coupling]
Let $K$ be a finite set of indices. For all $k \in K$, let $$X^k = \chain{X^k}{i}$$ be a Markov chain over $\S$. A \emph{coupling} of the $X^k$ is a family of joint Markov chains $$\X = \(\chain{\X^k}{i}\)_{k \in K}$$ defined over a same probability space such that, for all $k \in K$, the marginal distribution of $\X^k = \chain{\X^k}{i}$ is that of $X^k$.
\end{define}

\begin{define}[Coupling Time]
Let $X$ be a coupling of Markov chains over $\S$. We say $X$ has \emph{coupled} at instant $i$ if all the $X^k_i$, $k \in K$ are equal.

We furthermore define the \emph{coupling time} $\tau$ of $X$ as the first instant at which $X$ has coupled, i.e. $$\tau = \min \{i\in\N\mid\forall\(k,l\)\in K^2, X^k_i = X^l_i\},$$ with the convention that $\min \emptyset = +\infty$.
\end{define}

Note that, unlike the mixing time, the coupling time is a random variable.

\subsection{Markov Automata}

Markov automata are a convenient means of writing random mapping representations \cite{LPW06} when defining a coupling between Markov chains that have the same transition matrix.

\begin{define}[Markov automaton]
A \emph{Markov automaton} is a quadruple $\A=\(\S,A,D,\cdot\)$, where:
\begin{itemize}
\item $\S$ is a finite state space;
\item $A$ is a finite alphabet;
\item $D$ is a probability distribution over $A$;
\item $\cdot$ is an action by the letters of $A$ on the states of $\S$:
$$\cdot\ : \begin{cases} \S \times A & \rightarrow \S \\ \(x,a\) & \mapsto x \cdot a \end{cases}$$
\end{itemize}
We make the assumption that, for all $a \in A$, $D\(a\) > 0$. If not, it is possible to build a reduced Markov automaton satisfying this property by removing all the letters $a \in A$ such that $D\(a\) = 0$.
\end{define}

Let $A^* = \bigcup_{k \in \N} A^k$ be the set of finite words, $A^\omega = A\product\N$ be the set of infinite words, and $A^\infty = A^\star \cup A^\omega$.

For a \emph{word} $u \in A^\infty$ and for $-\infty \leq i, j \leq \infty$, we denote $\word{u}{i}{j}$ the subword $\(u_i,\ldots,u_j\)$ if $i \leq j$, or $\epsilon$ if $j < i$.

For convenience, we furthermore write $S \cdot a$ for $\{x \cdot a \mid x \in S\}$ and $x \cdot \word{u}{1}{n}$ for $x \cdot u_1 \cdot \ldots \cdot u_n$, such that $S \cdot \word{u}{1}{n}$ stands for $$\{x \cdot u_1 \cdot \ldots \cdot u_n \mid x \in S\}.$$

Let $\A=\(\S,A,D,\cdot\)$ be a Markov automaton, and $\word{u}{1}{\infty} \sim D\product\N$. For all $x \in \S$, define $X\(x\) = \schain{X}{i}{x}$ by $$\forall i\in\N, X_i\(x\) = x \cdot \word{u}{1}{i},$$ i.e. $X_i\(x\)$ is the state reached when starting in $x$ and reading $\word{u}{1}{i}$.

\begin{prop}
For every $x \in \S$, $X\(x\)$ is a Markov chain, called the Markov chain \emph{generated} by $\A$ and $x$. Furthermore, these Markov chains have the same transition matrix $M_\A$.
\end{prop}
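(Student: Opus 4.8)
The plan is to establish the Markov property directly from the independence of the driving letters together with the letter-by-letter nature of the action. First I would record the fundamental recursion: since $X_i\(x\) = x \cdot \word{u}{1}{i}$ and the action composes one letter at a time, we have $X_{i+1}\(x\) = X_i\(x\) \cdot u_{i+1}$ for every $i\in\N$. Thus the successor state is obtained from the current state by applying the single fresh letter $u_{i+1}$, and the whole argument reduces to understanding the effect of one independent draw.

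The key step is an independence argument. Because $\word{u}{1}{\infty} \sim D\product\N$, the letters are mutually independent and each is distributed according to $D$; in particular $u_{i+1}$ is independent of $\sigma\(u_1,\ldots,u_i\)$. Since each of $X_0\(x\),\ldots,X_i\(x\)$ is a deterministic function of $\word{u}{1}{i}$, it is measurable with respect to that $\sigma$-algebra, so conditioning on the entire past collapses to conditioning on the current state:
$$\P\(X_{i+1}\(x\) = y \mid X_0\(x\)=x_0,\ldots,X_i\(x\)=x_i\) = \P\(x_i \cdot u_{i+1} = y\) = \sum_{a \in A \,:\, x_i \cdot a = y} D\(a\)$$
for any states $x_0,\ldots,x_i,y \in \S$, the first equality using the recursion together with the independence of $u_{i+1}$ from the past. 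The right-hand side depends on the trajectory only through $x_i$, which is exactly the Markov property.

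It then remains to read off the transition matrix. The computation above shows that the one-step probabilities are $M_\A\(x_i,y\) = D\(\{a \in A : x_i \cdot a = y\}\)$, an expression involving neither the time index $i$ nor the initial state $x$. Independence of $i$ shows that each $X\(x\)$ is a time-homogeneous Markov chain, while independence of $x$ shows that all of these chains share the single transition matrix $M_\A$. I do not expect a serious obstacle here; the only point demanding care is the measurability observation that lets us replace the full conditioning on $\(X_0\(x\),\ldots,X_i\(x\)\)$ by the lone fresh letter $u_{i+1}$, which rests on the past states being $\sigma\(u_1,\ldots,u_i\)$-measurable while $u_{i+1}$ is drawn independently of them.
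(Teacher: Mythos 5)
Your proof is correct. The paper offers no proof of this property---it is stated as the standard random-mapping-representation fact (cf.\ the Levin--Peres--Wilmer text cited alongside the definition of Markov automata)---and your argument, reducing the Markov property to the recursion $X_{i+1}(x)=X_i(x)\cdot u_{i+1}$ together with the independence of the fresh letter $u_{i+1}$ from $\sigma(u_1,\ldots,u_i)$, and then reading off the common, time- and initial-state-independent kernel $M_\A(s,y)=D(\{a\in A \mid s\cdot a=y\})$, is precisely the canonical one.
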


The family $X = \(X\(x\)\)_{x\in\S}$ is a natural coupling between these Markov chains, called the \emph{grand coupling} of $\A$.

If there exists a word $\word{u}{1}{n}$ such that $\S \cdot \word{u}{1}{n}$ is a singleton, we say that $\A$ \emph{couples}, and call $\word{u}{1}{n}$ a coupling word.

\begin{prop}[\cite{PW96}] \label{thm:couple-ergodic}
If $\A$ couples, we have that $M_\A$ is ergodic, and that $$\P\{\lim_{i\rightarrow\infty}\size{X_i} = 1\} = 1,$$ i.e. $X$ a.s. has a finite coupling time.
\end{prop}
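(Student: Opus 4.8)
The plan is to treat the two assertions in turn, proving the almost-sure finiteness of the coupling time first and then deducing ergodicity from it. The guiding observation is that the cardinality $\size{X_i}$ of the image $\S \cdot \word{u}{1}{i}$ is non-increasing in $i$: since $\S \cdot \word{u}{1}{i+1} = \(\S \cdot \word{u}{1}{i}\) \cdot u_{i+1}$ and acting by a letter is a map from $\S$ to $\S$, the cardinality can only decrease or stay equal. As $\size{X_i} \geq 1$ throughout, this sequence converges, the event $\{\tau < \infty\}$ coincides with $\{\lim_{i\to\infty}\size{X_i} = 1\}$, and once $\size{X_i} = 1$ it stays equal to $1$ forever.

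First I would quantify the effect of a coupling word. Let $\word{w}{1}{n}$ be one, so $\S \cdot \word{w}{1}{n} = \{y\}$ for some $y$, and set $p = \prod_{k=1}^{n} D\(w_k\)$; the standing assumption $D\(a\) > 0$ makes $p > 0$, and $p$ is exactly the probability that a fresh block of $n$ i.i.d.\ letters spells $\word{w}{1}{n}$. The key point is that such a block collapses the whole image, because $T \cdot \word{w}{1}{n} \subseteq \S \cdot \word{w}{1}{n} = \{y\}$ for every nonempty $T \subseteq \S$. I would then cut the driving word $\word{u}{1}{\infty}$ into disjoint consecutive blocks of length $n$; these are independent and each equals $\word{w}{1}{n}$ with probability $p$. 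If one of the first $m$ blocks is the coupling word, then monotonicity and the collapsing property force coupling by time $mn$, so $\P\{\tau > mn\} \leq \(1-p\)^m$. Letting $m \to \infty$ gives $\P\{\tau = \infty\} = 0$, which is precisely the claimed $\P\{\lim_{i\to\infty}\size{X_i} = 1\} = 1$.

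Ergodicity of $M_\A$ would then follow from this coupling. A finite-state chain always admits at least one stationary distribution $\pi$; drawing $Z \sim \pi$ independently of the driving word, stationarity gives $X_i\(Z\) \sim \pi$ for all $i$. Since the grand coupling coalesces every trajectory simultaneously, $X_i\(x\) = X_i\(Z\)$ as soon as $i \geq \tau$, so the coupling inequality yields $\TV{\rho_i^x - \pi} \leq \P\{X_i\(x\) \neq X_i\(Z\)\} \leq \P\{\tau > i\}$, which tends to $0$ by the previous paragraph. Hence every initial distribution converges to the single distribution $\pi$, so $\pi$ is the unique stationary distribution and $M_\A$ is ergodic.

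I expect the crux to be the block argument of the second step: one must partition the word into \emph{independent} blocks and lean on monotonicity so that a single successful block guarantees coupling \emph{permanently}, rather than at one isolated instant --- a naive appeal to the positive per-step coupling probability would not on its own exclude $\tau = \infty$. A secondary point requiring care is that the coupling time $\tau$ controls all trajectories at once, which is what licenses comparing an arbitrary initial state with the stationary one in the coupling inequality.
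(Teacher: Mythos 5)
Your proof is correct and is essentially the canonical argument: the paper itself states this property without proof, deferring to Propp and Wilson \cite{PW96}, and the proof there (and in standard references) is exactly your two steps --- cut the i.i.d.\ driving word into independent length-$n$ blocks, each spelling the coupling word with probability $p = \prod_{k=1}^{n} D(w_k) > 0$ so that monotonicity of $i \mapsto \size{\S \cdot \word{u}{1}{i}}$ gives $\P\{\tau > mn\} \leq (1-p)^m$, then run the coupling inequality against a stationary draw to obtain total-variation convergence to a unique stationary law. One clarification worth keeping in mind: what your argument establishes is uniqueness of the stationary distribution together with convergence from every initial state, and this is the sense of ``ergodic'' the statement must intend, since coupling does not force irreducibility (a one-letter automaton on $\{1,2\}$ sending both states to $2$ couples, yet state $1$ is transient), so your reading is the only tenable one and your proof needs no repair.
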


The reciprocal is not true: it is possible to construct a non-coupling Markov automaton $\A$ such that $M_\A$ is ergodic.

If $\A$ couples, we define its stationary distribution and mixing time as those of $M_\A$, and its coupling time $\tau$ as that of $X$. The coupling time of a Markov automaton is closely linked to its mixing time, as shown in the following property.

\begin{prop}[\cite{LPW06}] \label{thm:mix-couple}
If $\A$ couples, the expected coupling time of $\A$ is lower-bounded by the mixing time $t_{\mathrm{mix}}$ of $M_\A$, i.e. $$t_{\mathrm{mix}} \leq \E{\tau}.$$
\end{prop}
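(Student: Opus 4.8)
The plan is to control the worst-case distance to stationarity $d(i) := \max_{x \in \S}\TV{\rho_i^x-\pi}$ by the tail of the coupling time $\tau$, and then to convert this tail bound into the announced inequality. I may freely assume $\E{\tau} < \infty$, since otherwise the bound is vacuous; recall that $M_\A$ is ergodic with stationary distribution $\pi$ and that $\tau < \infty$ almost surely, by Property~\ref{thm:couple-ergodic}. The first step is to insert a stationary trajectory into the grand coupling. Let $\word{u}{1}{\infty} \sim D\product\N$ be the driving word of $\A$, and draw $Y \sim \pi$ independently of $u$. Setting $X_i(Y) = Y \cdot \word{u}{1}{i}$, a one-line induction shows $X_i(Y) \sim \pi$ for every $i$: the claim holds at $i=0$, and if $X_i(Y) \sim \pi$ then $X_{i+1}(Y) = X_i(Y) \cdot u_{i+1}$ has distribution $\pi M_\A = \pi$, because $u_{i+1}$ is independent of $X_i(Y)$ and $\pi$ is stationary. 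Thus, for each fixed $x$, the pair $(X_i(x), X_i(Y))$ is a coupling of $\rho_i^x$ and $\pi$.

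Next I would apply the coupling inequality. For any event $B \subseteq \S$ we have $\rho_i^x(B) - \pi(B) = \P\{X_i(x) \in B\} - \P\{X_i(Y) \in B\} \le \P\{X_i(x) \neq X_i(Y)\}$, since the two chains agree on the event $\{X_i(x) = X_i(Y)\}$; maximising over $B$ gives $\TV{\rho_i^x-\pi} \le \P\{X_i(x) \neq X_i(Y)\}$. Here the coalescing nature of the grand coupling is what makes the argument work: $X(x)$ and $X(Y)$ read the same word, so as soon as $\size{\S \cdot \word{u}{1}{i}} = 1$ all chains coincide, whence $X_i(x) = X_i(Y)$ for every $i \ge \tau$. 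Therefore $\{X_i(x) \neq X_i(Y)\} \subseteq \{\tau > i\}$, and taking the maximum over $x$ yields the central estimate $d(i) \le \P\{\tau > i\}$.

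It then remains to deduce $t_{\mathrm{mix}} \le \E{\tau}$. Since $d$ is non-increasing (each application of $M_\A$ can only decrease the total variation distance to $\pi$), the quantity $t_{\mathrm{mix}}$ equals the number of indices $i$ with $d(i) > \frac14$, and for each of these the estimate above forces $\P\{\tau > i\} \ge d(i) > \frac14$. The natural way to finish is to combine this with the identity $\E{\tau} = \sum_{i \ge 0} \P\{\tau > i\}$, together with Markov's inequality $\P\{\tau > i\} = \P\{\tau \ge i+1\} \le \E{\tau}/(i+1)$, in order to locate the first index at which the tail drops below $\frac14$.

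I expect this final conversion to be the main obstacle. The stationary-copy construction and the coupling inequality are essentially routine, but matching the constant is not: bounding the tail term by term against the threshold $\frac14$ only controls $t_{\mathrm{mix}}$ by a constant multiple of $\E{\tau}$, so recovering the sharp constant in the stated inequality $t_{\mathrm{mix}} \le \E{\tau}$ is where genuine care is required. The honest way to present the step is therefore to exploit the monotonicity of $d$ and the summation identity for $\E{\tau}$ directly, rather than a crude application of Markov's inequality, and to note in passing that in the degenerate case where $\tau$ is almost surely constant the bound is immediate, since then $\P\{\tau > i\}$ vanishes exactly at $i = \E{\tau}$.
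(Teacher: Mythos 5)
Your first two steps are correct and entirely standard: inserting a stationary copy $X_i(Y)$ into the grand coupling, noting that all trajectories driven by the same word coincide from time $\tau$ onwards, and applying the coupling inequality gives
$$d(i) := \max_{x\in\S}\TV{\rho_i^x-\pi} \leq \P\{\tau > i\}$$
for every $i$. This is exactly the tail bound in the cited reference \cite{LPW06} (the paper itself offers no proof of this Property, only the citation).

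The genuine gap is the final conversion from the tail bound to $t_{\mathrm{mix}} \leq \E{\tau}$, and your closing paragraph concedes the difficulty without repairing it. From $d(i)\leq\P\{\tau>i\}$, the monotonicity of $d$, and the identity $\E{\tau}=\sum_{i\geq0}\P\{\tau>i\}$, the only available conclusion is: for each $i<t_{\mathrm{mix}}$ one has $\P\{\tau>i\}\geq d(i)>\frac14$, whence $\E{\tau}>\frac{t_{\mathrm{mix}}}{4}$, i.e.\ $t_{\mathrm{mix}}<4\,\E{\tau}$. Your proposal to ``exploit the monotonicity of $d$ and the summation identity directly'' yields precisely this factor-$4$ bound and nothing sharper; likewise Markov's inequality only certifies $\P\{\tau>i\}\leq\frac14$ once $i+1\geq4\,\E{\tau}$. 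Nothing in your toolkit excludes a coupling whose tail sits just above $\frac14$ on an interval of length close to $4\,\E{\tau}$ (take $\tau=0$ with probability slightly below $\frac34$ and $\tau=T$ otherwise), so no manipulation of the inequality $d(i)\leq\P\{\tau>i\}$ alone can produce the constant $1$. Even the extra structure you did not invoke --- the grand-coupling tail is submultiplicative, $\P\{\tau>s+t\}\leq\P\{\tau>s\}\,\P\{\tau>t\}$, because the letter blocks $\word{u}{1}{s}$ and $\word{u}{s+1}{s+t}$ are independent and coalescence of either block forces coalescence of the whole word --- tightens the constant below $4$ but still not to $1$ by this route. Indeed the cited source proves the factor-$4$ statement ($t_{\mathrm{mix}}\leq 4\max_{x,y}\E{\tau_{x,y}}$, Corollary 5.5 of \cite{LPW06}). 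So as written your argument establishes $t_{\mathrm{mix}}<4\,\E{\tau}$, not the stated inequality; the aside about the degenerate case of constant $\tau$ is correct but does not extend, and the step you defer to would fail.
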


It is important to underline that the distribution of the unique value of the grand coupling at the first moment of coupling is \emph{not} distributed according to the stationary distribution of $\A$ \cite{H02}. We now introduce an algorithm which uses the grand coupling of a Markov automaton to obtain that distribution.

\subsection{Coupling from the Past}

Let $\A=\(\S,A,D,\cdot\)$ be a coupling Markov automaton and $\word{u}{-\infty}{-1} \sim D\product\N$. Define $\chain{S}{i}$ by $$\forall i\in\N, S_i = \S \cdot \word{u}{-i}{-1}$$ and let $\tau^b$ be the first $i$ for which $S_i$ is a singleton. $\tau_b$ is called the backwards coupling time of the Markov automaton.

\begin{thm}[\cite{PW96}]\label{thm:cftp}
Using the above notations, we have that the unique element of $S_{\tau^b}$ is a.s. distributed according to the stationary distribution of $\A$, and that $\E{\tau^b} = \E{\tau}$, where $\tau$ is the coupling time of $\A$.
\end{thm}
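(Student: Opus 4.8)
The plan is to prove the two claims in turn, the identity on expected coupling times being a short distributional argument and the stationarity of the output being the substantive part. The whole proof rests on one structural observation about the backward construction, which I would establish first: since $\S \cdot u_{-(i+1)} \subseteq \S$, composing with the common suffix gives $$S_{i+1} = \(\S \cdot u_{-(i+1)}\) \cdot \word{u}{-i}{-1} \subseteq \S \cdot \word{u}{-i}{-1} = S_i.$$ Thus the family $\(S_i\)_{i\in\N}$ is nonincreasing, and since each letter sends states to states (so images are never empty), once $S_i$ is a singleton it stays the \emph{same} singleton for all larger indices. This locking-in is exactly what the forward coupling lacks and what makes $\tau^b$ well behaved.

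For the identity $\E{\tau^b} = \E{\tau}$, I would note that forward coupling also locks in, so that $\{\tau \leq i\} = \{\size{\S \cdot \word{u}{1}{i}} = 1\}$ and, by the monotonicity above, $\{\tau^b \leq i\} = \{\size{S_i} = 1\}$. The words $\word{u}{1}{i}$ and $\word{u}{-i}{-1}$ are both i.i.d.\ length-$i$ samples from $D$ and hence equal in law, so $\P\{\tau \leq i\} = \P\{\tau^b \leq i\}$ for every $i$. Equality of these distribution functions yields equality of the laws, and in particular of the expectations.

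The core of the theorem is that the unique element $Y$ of $S_{\tau^b}$ is distributed as $\pi$. Fix any reference state $x_0 \in \S$ and set $Z_n = x_0 \cdot \word{u}{-n}{-1}$. Because the letters are i.i.d.\ from $D$, the law of $Z_n$ is exactly $\rho_n^{x_0}$, the $n$-step distribution of the chain started at $x_0$; since $\A$ couples, $M_\A$ is ergodic by Property \ref{thm:couple-ergodic}, so $\rho_n^{x_0} \to \pi$ in total variation. On the other hand, the distributional identity just proved together with Property \ref{thm:couple-ergodic} guarantees $\tau^b < \infty$ almost surely, and for every $n \geq \tau^b$ the monotonicity forces $Z_n \in S_n = S_{\tau^b} = \{Y\}$, whence $Z_n \to Y$ almost surely. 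The law of $Y$ is therefore the distributional limit of $\rho_n^{x_0}$, namely $\pi$.

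The step I expect to be the main obstacle is making precise \emph{why} this backward reading is unbiased while naively stopping a forward trajectory at its coupling time is not, as the remark preceding the theorem warns. The point to emphasise is that $Y$ is never obtained by halting one chain at a random time: for each \emph{fixed} horizon $n$ the value $Z_n$ already carries the correct law $\rho_n^{x_0}$, and $\tau^b$ merely certifies that this value has stabilised, so no conditioning on the stopping time enters. The only remaining technical care is the interchange of the almost-sure limit $Z_n \to Y$ with the total-variation limit $\rho_n^{x_0} \to \pi$, which is justified by uniqueness of the limit in distribution.
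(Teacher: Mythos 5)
Your proof is correct. The backward monotonicity $S_{i+1} = \(\S \cdot u_{-(i+1)}\) \cdot \word{u}{-i}{-1} \subseteq S_i$ and the locking-in of singletons are established cleanly; the identity $\E{\tau^b} = \E{\tau}$ via the equality in law of $\word{u}{1}{i}$ and $\word{u}{-i}{-1}$ is exactly the standard argument; and the stabilization $Z_n = Y$ for $n \geq \tau^b$ is sound, with the final limit interchange being immediate in a finite state space since $\size{\P\{Z_n = x\} - \P\{Y = x\}} \leq \P\{\tau^b > n\} \rightarrow 0$. The comparison with the paper is slightly unusual: the paper gives no proof of Theorem \ref{thm:cftp} itself (it cites \cite{PW96}), and its only in-house rendition of the argument appears inside the proof of Theorem \ref{thm:o-cftp}, declared to be ``fundamentally the same'' as that of Theorem \ref{thm:cftp}. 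That rendition handles the stationarity step differently from yours: instead of starting a chain at a fixed $x_0$ and invoking the convergence theorem $\rho_n^{x_0} \rightarrow \pi$ for ergodic chains, it fixes $\varepsilon > 0$, chooses a horizon $t_\varepsilon$ at which coupling has occurred with probability at least $1-\varepsilon$, and runs an auxiliary chain $Y = y \cdot \word{\widetilde w}{-t_\varepsilon}{0}$ from a \emph{stationary} draw $y \sim \pi$; stationarity gives $Y \sim \pi$ exactly, the output equals $Y$ on the coupling event, and so $\size{\P\{x_{\mathrm{out}}=x\} - \pi\(x\)} \leq \varepsilon$ for every $\varepsilon$. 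Your route is shorter but imports the total-variation convergence theorem as an external fact (the paper, via Property \ref{thm:couple-ergodic}, only records ergodicity); the paper's $\varepsilon$-horizon argument is self-contained given the coupling machinery and, more importantly, is the form that survives the adaptation to skipping in Theorem \ref{thm:o-cftp}, where the generating sequence is only shown to be i.i.d.\ according to $D$ after re-expansion. Your closing observation — that $Z_n$ carries the correct law at each \emph{fixed} horizon and $\tau^b$ merely certifies that its value has stabilised, so no conditioning on a stopping time is introduced — is precisely the right explanation of why the backward scheme is unbiased while stopping a forward trajectory at its coupling time is not.
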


This method for generating random variables according to the stationary distribution of a Markov chain is called \emph{coupling from the past} (CFTP). The sequence $\word{u}{-\infty}{-1}$ is called the \emph{generating sequence}. The corresponding algorithm is given in Algorithm \ref{alg:cftp}.

\begin{algorithm}
\caption{Coupling From the Past (CFTP)}
\label{alg:cftp}
\begin{algorithmic}
\Function{CFTP}{$\(\S,A,D,\cdot\)$}
 \For{$s \in \S$}
  \State $S\(s\) \gets s$
 \EndFor
 \Repeat
  \State $a \gets \draw{D}$
  \For{$s \in \S$}
   \State $T\(s\) \gets S\(s \cdot a\)$
  \EndFor
  \State $S \gets T$
 \Until{$\size{S\(\S\)} = 1$}
 \State \Return $\Call{UniqueElementOf}{S\(\S\protect\)}$
\EndFunction
\end{algorithmic}
\end{algorithm}

\begin{prop}
The expected complexity of the CFTP algorithm is $O\(\size{\S} \tau \gamma\)$, where $\gamma$ is the computation time of~$\cdot$.
\end{prop}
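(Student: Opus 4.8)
The plan is to analyze the CFTP algorithm (Algorithm~\ref{alg:cftp}) by separating its cost into two factors: the expected number of iterations of the \textbf{Repeat} loop, and the cost of a single iteration. First I would observe that each iteration of the loop draws one letter $a \sim D$ and then recomputes the mapping $S$ by composing it with the action of $a$. The key structural fact, inherited from Theorem~\ref{thm:cftp}, is that the number of loop iterations is exactly the backwards coupling time $\tau^b$, whose expectation equals $\E{\tau}$, the expected coupling time. So the expected number of iterations is $\E{\tau}$, and it remains to bound the per-iteration cost.

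Next I would account for the work done inside a single iteration. The inner \textbf{For} loop ranges over all $s \in \S$, so it performs $\size{\S}$ updates. Each update computes $S\(s \cdot a\)$, which requires one evaluation of the action $\cdot$ (costing $\gamma$ by assumption) followed by a table lookup in $S$, which I would treat as constant time. Drawing the letter $a$ is a single sample from $D$ and contributes negligibly compared to the $\size{\S}$ action evaluations. Hence one iteration costs $O\(\size{\S}\gamma\)$.

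Combining the two estimates, the expected total cost is the expected number of iterations times the per-iteration cost, i.e.\ $\E{\tau} \cdot O\(\size{\S}\gamma\) = O\(\size{\S}\,\E{\tau}\,\gamma\)$; writing $\tau$ for $\E{\tau}$ in the informal $O$-notation of the statement gives the claimed $O\(\size{\S}\tau\gamma\)$. The main subtlety to be careful about is that the running time is a product of a \emph{random} number of iterations and a \emph{deterministic} per-iteration cost, so passing to expectations is immediate here; the only point requiring justification is that the number of iterations is genuinely $\tau^b$ and that $\E{\tau^b}=\E{\tau}<\infty$, which is precisely what Theorem~\ref{thm:cftp} guarantees once $\A$ couples. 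The remaining, and mildest, obstacle is bookkeeping: making explicit that the copy $S \gets T$ and the table lookups do not dominate the $\size{\S}$ evaluations of $\cdot$, so that $\gamma$ is the only nonconstant per-state factor.
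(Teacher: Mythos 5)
Your proof is correct and is precisely the intended argument: the paper states this Property without giving a proof, and the evident justification is your decomposition into the expected number of \textbf{Repeat} iterations of Algorithm~\ref{alg:cftp} --- which is exactly the backwards coupling time $\tau^b$, since after $i$ iterations the table satisfies $S\(s\) = s \cdot a_i \cdot a_{i-1} \cdot \ldots \cdot a_1$, i.e.\ newly drawn letters are prepended so that $S\(\S\) = \S \cdot \word{u}{-i}{-1}$, and $\E{\tau^b} = \E{\tau}$ by Theorem~\ref{thm:cftp} --- multiplied by the deterministic per-iteration cost of $O\(\size{\S}\,\gamma\)$ coming from the $\size{\S}$ evaluations of $\cdot$. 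Your two bookkeeping remarks (that the copy $S \gets T$, the table lookups, and the single draw from $D$ are dominated by the $\size{\S}$ action evaluations, and that multiplying a random iteration count by a deterministic per-iteration cost passes to expectations immediately) are accurate and complete the argument.
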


The complexity is linear in $\size{\S}$, which can be very large. A workaround for this is to use \emph{bounding chains} \cite{H04}. 

\begin{define}
Let $\B$ be a subset of the power set of $\S$, containing $\S$, and let $\circ : \B \times A \rightarrow \B$ be an operator such that for all $a \in A$ and $B \in \B$, $$x \in B \Rightarrow x \cdot a \in B \circ a.$$
The \emph{bounding chain} of $\S \cdot \word{u}{1}{n}$ induced by $\(\B, \circ\)$ is the sequence $\S \circ \word{u}{1}{n} = \S \circ u_1 \circ \ldots \circ u_n$.
\end{define}

Notice that, for any $n\in\N$ and $\word{u}{1}{n} \in A^n$, $$\S \cdot \word{u}{1}{n} \subseteq \S \circ \word{u}{1}{n},$$ hence the term \emph{bounding chain}.

Let $\word{u}{-\infty}{-1} \sim D\product\N$. For all $i\in\N$, we have that $$\S\cdot\word{u}{-i}{-1}\subseteq\S\circ\word{u}{-i}{-1}.$$ As a consequence, if there exists a word $\word{u}{-n}{-1}$ such that $\S\circ\word{u}{-n}{-1}$ is a singleton, then so is $\S\cdot\word{u}{-n}{-1}$, and they contain the same element.

From this, we derive a variant of the CFTP algorithm in which we iteratively compute $$B_i=\S\circ\word{u}{-i}{-1}$$ until we obtain a singleton. The backwards coupling time $\tau^b$ of the bounding chain is then defined as the hitting time of the set of singletons. Note that the sequence $$\(\S \circ \word{u}{-i}{-j}\)_{j\in\inter{-i}{-1}}$$ must now be recomputed at each iteration. This yields an overall complexity in $O\(\tau^2 \Gamma\)$, where $\Gamma$ represents the computation time of $\circ$, often small compared to $\gamma \size{\S}$.

The appearance of a quadratic dependency in $\tau$ can be overcome by doubling the period at each iteration \cite{PW96}. The algorithm is given in Algorithm \ref{alg:cftp-bound}, and has a complexity of $$O\(\tau \Gamma\).$$

\begin{algorithm}
\caption{CFTP with Bounding Chains}
\label{alg:cftp-bound}
\begin{algorithmic}
\Function{Bounded-CFTP}{$\(\S,A,D,\cdot\)$}
 \State $w \gets \epsilon$
 \State $k \gets 1$
 \Repeat
  \State $w \gets \draw{D\product{k}} \cdot w$
  \State $B \gets \S \circ w$
  \State $k \gets 2 k$
 \Until{$\size{B} = 1$}
 \State \Return $\Call{UniqueElementOf}{B}$
\EndFunction
\end{algorithmic}
\end{algorithm}
\section{Skipping}
\label{sec:skipping}

Skipping was introduced in \cite{PBG11}, in a form close to what we call here \emph{incremental skipping}, as a means of speeding up the CFTP algorithm by avoiding certain ``passive'' events. It is introduced here alongside our own approach, \emph{oracle skipping}.

We first introduce these two methods on forward coupling chains, and show their computational similarities. We then adapt oracle skipping to the CFTP algorithm, and give proof of its correctness.

\subsection{Oracle and Incremental Skipping}
\label{ssec:oracle}
Consider a forward coupling algorithm, with bounding chain $B = \chain{B}{i}$, and let $\word{u}{1}{\infty}$ be the corresponding sequence of letters: $$\forall i \in \N, B_i = \S \circ \word{u}{1}{i}.$$

We say a letter $a$ is \emph{inactive at instant $i$} if $B_i \circ a = B_i$, and that it is \emph{active} otherwise. Let $\Act_{\word{u}{1}{i-1}}$ be the set of active events at instant $i$. With these definitions, we consider a new bounding chain $B^\mathcal O = \chain{B^\mathcal O}{i}$ such that $$\forall i \in \N, B^\mathcal O_i = \S \circ \word{v}{1}{i},$$ with the $\word{v}{1}{\infty}$ drawn according to $D$ \emph{conditioned to being active letters}: $$\forall i \in \N, v_i \sim D\(\,\cdot \mid \Act_{\word{v}{1}{i-1}}\).$$

This method of generating a bounding chain is called \emph{oracle skipping}.

Despite oracle skipping being easy to manipulate on a theoretical level, it can be difficult to get an efficient implementation of the algorithm. This is due to the cost of computing the conditional distribution at each iteration.

The original skipping algorithm \cite{PBG11}, \emph{incremental skipping}, provides a workaround for this: rather than recomputing the entire distribution of events at each step, the algorithm updates its distribution incrementally.

Consider that the resulting bounding chain $B^\mathcal I = \chain{B^\mathcal I}{i}$ is obtained from a word $\word{v'}{1}{n}$, such that $$\forall i \in \N, B^\mathcal I_i = \S \circ \word{v'}{1}{i}.$$ Begin by setting $D^0 = D$. For all $i \in \N$, $v'_i$ is drawn according to $D^i$. If $v'_i \in \Act_{\word{v'}{1}{i-1}}$, set $D^{i+1} = D$, otherwise define $D^{i+1}$ by removing $v'_i$ from the set of possible events: $$\forall a \in A, D^{i+1}\(a\) = D^i\(a \mid a \neq v'_i \).$$ This process constructs the distributions $\(D^i\)_{i \in \N}$ recursively by removing at most one event at each iteration.

Though incremental skipping is more efficient in the case of complicated conditional distributions, the complexity of the rest of the algorithm is greater than in the case of oracle skipping. Furthermore, it is very difficult to obtain theoretical bounds with incremental skipping.

The following property justifies studying oracle skipping to derive bounds of coupling time of skipping algorithms, regardless of implementation.

\begin{prop} \label{thm:oracle}
Call $\tau_\mathcal O$ and $\tau_\mathcal I$ the coupling times of $B^\mathcal O$ and $B^\mathcal I$. We have that $$\tau_\mathcal O \leq \tau_\mathcal I \leq M \cdot \tau_\mathcal O,$$ where $M = \size{A}$ is the cardinality of the event set.
\end{prop}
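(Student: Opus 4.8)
The plan is to realise both bounding chains on the same probability space so that they execute exactly the same sequence of \emph{active} letters, and then simply to count how many letters each one consumes. Concretely, I would let $B^{\mathcal I}$ run according to incremental skipping and extract the subsequence of active letters $a_1, a_2, \ldots$ it reads, one per \emph{epoch}, where an epoch is a maximal run of draws between two consecutive resets $D^i = D$. Feeding $a_1, a_2, \ldots$ one by one into the oracle chain then defines $B^{\mathcal O}$. Because passive draws never move the chain, after its first $k$ active letters $B^{\mathcal I}$ sits in the state $\S \circ \word{a}{1}{k}$, which is exactly $B^{\mathcal O}_k$; in particular the two chains share the same active set at every epoch boundary, and $B^{\mathcal I}$ becomes a singleton precisely when it completes the active move that makes $B^{\mathcal O}$ a singleton.

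For this construction to be a genuine coupling I must check that the derived $B^{\mathcal O}$ really obeys the oracle dynamics, and this is the step I expect to be the main obstacle: I need to show that a single epoch of incremental skipping terminates in an active letter distributed exactly as $D\(\cdot \mid \Act\)$, the law used by oracle skipping. Since passive draws leave the state unchanged, and hence also the active set $\Act$ and the passive set $\Inact = A \setminus \Act$, I would fix the state and let $f(R)$ denote the probability that the epoch ends in a fixed active letter $a$, given that the passive letters in $R \subseteq \Inact$ have already been drawn and discarded. A downward induction on $\size{R}$, from $R = \Inact$ down to $R = \emptyset$, shows $f(R) = D(a)/D(\Act)$ regardless of $R$: the base case $R = \Inact$ is immediate since only active letters then remain, and the inductive step rests on the one-step recursion
\[
f(R) = \frac{D(a) + \sum_{c \in \Inact \setminus R} D(c)\, f(R \cup \{c\})}{D(A \setminus R)}
\]
combined with the identity $\sum_{c \in \Inact \setminus R} D(c) = D(A \setminus R) - D(\Act)$, which collapses the right-hand side to $D(a)/D(\Act)$. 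Taking $R = \emptyset$ gives the claim, so the $k$-th active letter of $B^{\mathcal I}$ indeed has the oracle law $D\(\cdot \mid \Act_{\word{a}{1}{k-1}}\)$ and $B^{\mathcal O}$ is a bona fide oracle chain.

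With the coupling in place, the conclusion is a counting argument. Let $N = \tau_\mathcal O$ be the number of (all active) letters $B^{\mathcal O}$ reads before coupling; under the coupling $B^{\mathcal I}$ couples exactly at the end of its $N$-th epoch, so $\tau_\mathcal I$ is the total number of letters in those $N$ epochs. Each epoch contains at least one draw, giving $\tau_\mathcal O \le \tau_\mathcal I$. Each epoch also contains at most $\size{\Inact} + 1 \le M$ draws, since once every passive letter has been removed from the current distribution only active letters survive and the next draw must be active; summing over the $N$ epochs yields $\tau_\mathcal I \le M \cdot \tau_\mathcal O$. (If some reachable non-singleton state admits no active letter, the oracle distribution is undefined and neither chain can couple, so $\tau_\mathcal O = \tau_\mathcal I = +\infty$ and the inequalities hold trivially.) Together these give $\tau_\mathcal O \le \tau_\mathcal I \le M \cdot \tau_\mathcal O$.
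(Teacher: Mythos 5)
Your proof is correct, and its two-sided counting is exactly the paper's: an epoch of incremental skipping consists of pairwise distinct passive draws followed by one active draw, so each active letter costs at least one and at most $M$ draws --- this is precisely the paper's observation that two active events in the incremental word are separated by a sequence of pairwise different passive events, hence lie at distance at most $M$, giving $\tau_\mathcal O \leq \tau_\mathcal I \leq M \cdot \tau_\mathcal O$. Where you genuinely differ is in how the coupling is built and justified. The paper goes top-down: it takes a single unconditioned word $u \sim D\product\N$, obtains the oracle word by deleting \emph{all} passive letters and the incremental word by deleting \emph{some} of them, and asserts without computation that these deletions produce the correct marginal laws (true via a rejection-sampling reading of conditioning; this construction also couples the unskipped chain $B$ simultaneously, yielding $B_i^\mathcal O = B_{\phi(i)}^\mathcal I = B_{\psi(\phi(i))}$). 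You go bottom-up: you run $B^\mathcal I$, extract its active letters, and prove by your downward induction on the discarded set $R$ that each epoch's terminal letter has law $D\(\,\cdot\mid\Act\)$ \emph{independently of $R$}, so the extracted sequence genuinely drives an oracle chain; the fact that passive draws do not move the bounding set guarantees the two chains share the same active sets, so the counting applies pathwise. That induction is exactly the distributional identification the paper leaves implicit, so your route is slightly longer but more self-contained at the one point where the paper waves its hands; you also explicitly dispose of the degenerate case $\Act = \emptyset$ at a non-singleton state (both coupling times infinite), which the paper tacitly ignores.
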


\begin{proof}
Notice that $\word{v}{1}{\infty}$ can be obtained from $\word{u}{1}{\infty}$ by removing \emph{all} of its passive letters, and $\word{v'}{1}{\infty}$ can be obtained from $\word{u}{1}{\infty}$ by removing \emph{some} of its passive letters. Doing so results in a coupling of the three bounding chains $B^\mathcal O$, $B^\mathcal I$ and $B$ such that there exists increasing functions $\phi$ and $\psi$ from $\N$ to $\N$ satisfying: $$B_i^\mathcal O = B_{\phi\(i\)}^\mathcal I = B_{\psi\(\phi\(i\)\)}.$$ The first inequality is a direct consequence of this, since $\word{v}{1}{\tau_\mathcal O}$ is therefore a subsequence of $\word{v'}{1}{\tau_\mathcal I}$.

For the second inequality, notice that two active events in $\word{v'}{1}{\tau_\mathcal I}$ are separated by a sequence of pairwise different passive events, and the distance between the two is therefore at most $M$. This implies that $\word{v}{1}{\tau_\mathcal O}$ contains at least one letter out of $M$ from $\word{v'}{1}{\tau_\mathcal I}$, i.e. $$\tau_\mathcal O \geq \frac{\tau_\mathcal I}{M}.$$

This concludes the proof.
\end{proof}

\subsection{CFTP with Oracle Skipping}
\label{ssec:oraclecftp}
We now adapt oracle skipping to the CFTP algorithm. For an adaptation of incremental skipping, see \cite{PBG11}.

The difficulty in implementing a CFTP algorithm with oracle skipping lies in the fact that, as we move backwards in time, the state of the system at a \emph{fixed} instant $-k$ changes. The event $u_{-k}$ can therefore start out as active, then become passive, then active again, etc. each time we go further back in time. Whereas removing events that have become passive is not difficult, a passive event that was removed and that ought to be active once more cannot simply be pushed back in; keeping the event in memory would imply drawing every event, which defeats the purpose.

The solution adopted here consists in dropping passive letters completely, and inserting active letters according to an adequate distribution, one that preserves the dynamics of the initial bounding chain. We give the details of this algorithm, and prove its correctness.

To begin, we introduce a delimiter, denoted $\sharp$, used to split up our sequence of events, and two new operations: contraction, which consists in removing passive letters, and expansion, through which these passive letters are added back into a contracted word.

Fix a Markov automaton $\A = \(\S,A,D,\cdot\)$. Let $\B$ be a subset of the power set of $\S$ and $\circ$ be an operator, such that $\(\B, \circ\)$ induces a bounding chain for the grand coupling of $\A$.
Define the delimiter $\sharp$ as a letter that leaves states unchanged: $$ \forall x \in S, x \cdot \sharp = x  \quad \mathrm{and} \quad \forall B \in \B, B \circ \sharp = B.$$ Let $A_\sharp = A \cup \{\sharp\}$. For all $q \in [0,1]$, call $D_q$ be the distribution over $A_\sharp$ such that: $$D_q\(\sharp\) = q  \quad \mathrm{and} \quad \forall a \in A, D_q\(a\) = \(1-q\) \cdot D\(a\).$$ Furthermore, for any subset $S$ of $\S$, let $$\Act\(S\) = \{a \in A \mid S \neq S \circ a\} \cup \{\sharp\}$$ be the set of \emph{active} letters and $$\Inact\(S\) = \{a \in A \mid S = S \circ a\}$$ be the set of \emph{passive} letters. To simplify notations, we write $$\Act_\word{a}{1}{k} = \Act\(\S \circ \word{a}{1}{k}\) \quad \mathrm{and} \quad \Inact_\word{a}{1}{k} = \Inact\(\S \circ \word{a}{1}{k}\),$$ as these are the active and passive letters of the bounding chain after having read $\word{a}{1}{k}$.

Notice that $\sharp$ is an exception: it never modifies the state of the chain, yet is considered to be an active letter nonetheless.

We now define contraction and expansion. These are illustrated in Figure \ref{fig:con-exp}.

\begin{figure}
\centering
\begin{tikzpicture}[scale=0.4]
\tikzstyle{every node} = [anchor=south]
\draw (0,10) edge[->] (7.5,10);
\draw (0,10) edge[->] (0,13);
\node at (0.5,9) {\bf a}; \draw (0,11.5) edge[very thick] (1,12.5);
\node at (1.5,9) {\it b}; \draw (1,12.5) -- (2,12.5);
\node at (2.5,9) {\bf c}; \draw (2,12.5) edge[very thick] (3,10.5);
\node at (3.5,9) {\bf b}; \draw (3,10.5) edge[very thick] (4,11.5);
\node at (4.5,9) {\it b}; \draw (4,11.5) -- (5,11.5);
\node at (5.5,9) {\it b}; \draw (5,11.5) -- (6,11.5);
\node at (6.5,9) {\bf a}; \draw (6,11.5) edge[very thick] (7,10.5);
\node at (-1.1,11) {$\S\circ h$};
\node (top) at (-6,11) {\bf a \it b \bf c \bf b \it b \it b \bf a};

\draw (0,5) edge[->] (4.5,5);
\draw (0,5) edge[->] (0,8);
\node at (0.5,4) {\bf a}; \draw (0,6.5) edge[very thick] (1,7.5);
\node at (1.5,4) {\bf c}; \draw (1,7.5) edge[very thick] (2,5.5);
\node at (2.5,4) {\bf b}; \draw (2,5.5) edge[very thick] (3,6.5);
\node at (3.5,4) {\bf a}; \draw (3,6.5) edge[very thick] (4,5.5);
\node at (-1.1,6) {$\S\circ h$};
\node (mid) at (-6,6) {\bf a \bf c \bf b \bf a};

\draw (0,0) edge[->] (8.5,0);
\draw (0,0) edge[->] (0,3);
\node at (0.5,-1) {\bf a}; \draw (0,1.5) edge[very thick] (1,2.5);
\node at (1.5,-1) {\it b}; \draw (1,2.5) -- (2,2.5);
\node at (2.5,-1) {\it a}; \draw (2,2.5) -- (3,2.5);
\node at (3.5,-1) {\bf c}; \draw (3,2.5) edge[very thick] (4,0.5);
\node at (4.5,-1) {\it c}; \draw (4,0.5) -- (5,0.5);
\node at (5.5,-1) {\bf b}; \draw (5,0.5) edge[very thick] (6,1.5);
\node at (6.5,-1) {\it c}; \draw (6,1.5) -- (7,1.5);
\node at (7.5,-1) {\bf a}; \draw (7,1.5) edge[very thick] (8,0.5);
\node at (-1.1,1) {$\S\circ h$};
\node (bot) at (-6,1) {\bf a \it b \it a \bf c \it c \bf b \it c \bf a};

\draw (top.south) edge[->] node[anchor=east] {$h$-contraction} (mid.north);
\draw (mid.south) edge[->] node[anchor=east] {$(h,q)$-expansion} (bot.north);
\end{tikzpicture}

\small Active letters are shown in bold font.
\caption{Contracting and Expanding}\label{fig:con-exp}
\end{figure}
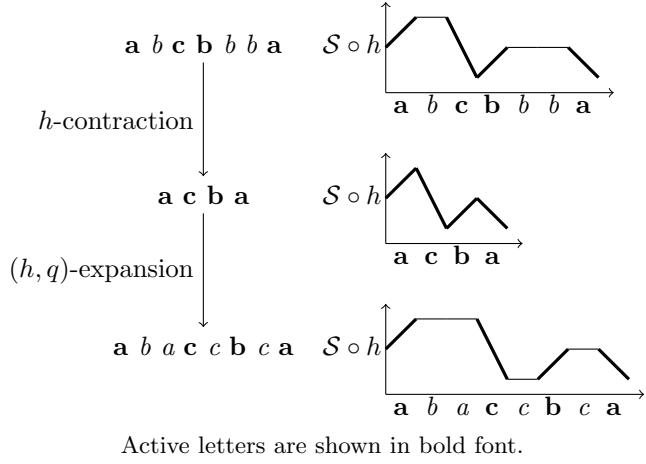

\begin{define}[$h$-contracting]
Contraction consists in removing the passive letters in a word. For a given history $h \in A_\sharp^*$, call \emph{$h$-contraction} the operation $c^h : A^\infty \rightarrow A^\infty$ defined recursively by $c^h\(\epsilon\) = \epsilon$ and $$c^h \(a \cdot u\) = \begin{cases} a \cdot c^{h \cdot a}\(u\) & \mbox{if $a \in \Act_h$} \\ c^h \( u\) & \mbox{otherwise.} \end{cases}$$
\end{define}

Notice that contraction is idempotent. A word invariant under $c^h$ is called a \emph{$h$-contracted} word.

\begin{define}[$\(h,q\)$-expanding]
Expansion consists in inserting passive letters in a word. For a given history $h \in A_\sharp^*$ and $q \in [0,1]$, call \emph{$\(h,q\)$-expansion} the operation $e^h_q : A^\infty \rightarrow A^\infty$ defined recursively by $e^h_q \(\epsilon\) = \epsilon$ and  $$e^h_q \(a \cdot u\) = \begin{cases} p \cdot e^h_q \(a \cdot u\) & \mbox{with probability $D_q\(\Inact_h\)$} \\ a \cdot e^h_q \(u\) & \mbox{with probability $D_q\(\Act_h\)$,} \end{cases}$$ where $p$ is a passive letter drawn independently according to $$D_q\(\,\cdot\mid\Inact_h\),$$ the distribution $D_q$ restricted to inactive letters.
\end{define}

Note that, during expansion, the number of passive letters inserted before each letter in the initial word is geometrically distributed.

Applying $e_q^h$ to a contracted word corresponds to constructing what the word \emph{``could have been''} before it was contracted by $c^h$, under the assumption that its letters were originally i.i.d. according to $D_q$ and that it ended with an active letter.

\begin{figure}
\centering
\begin{tikzpicture}[scale=0.4]
\tikzstyle{every node} = [anchor=south]
\draw (11,0) edge[->] (18.5,0);
\draw (11,0) edge[->] (11,3);
\node at (11.5,-1) {\bf a}; \draw (11,1.5) edge[very thick] (12,2.5);
\node at (12.5,-1) {\it b}; \draw (12,2.5) -- (13,2.5);
\node at (13.5,-1) {\bf c}; \draw (13,2.5) edge[very thick] (14,0.5);
\node at (14.5,-1) {\bf b}; \draw (14,0.5) edge[very thick] (15,1.5);
\node at (15.5,-1) {\it b}; \draw (15,1.5) -- (16,1.5);
\node at (16.5,-1) {\it b}; \draw (16,1.5) -- (17,1.5);
\node at (17.5,-1) {\bf a}; \draw (17,1.5) edge[very thick] (18,0.5);
\node at (9.9,1) {$\S\circ h$};
\node at (14.5,-1.7) {$\underbrace{\hspace*{2.8cm}}$};
\node (v) at (14.5,-2.3) {$v$};

\draw (7,-7) edge[->] (11.5,-7);
\draw (7,-7) edge[->] (7,-4);
\node at (7.5,-8) {\bf a}; \draw (7,-5.5) edge[very thick] (8,-4.5);
\node at (8.5,-8) {\bf c}; \draw (8,-4.5) edge[very thick] (9,-6.5);
\node at (9.5,-8) {\bf b}; \draw (9,-6.5) edge[very thick] (10,-5.5);
\node at (10.5,-8) {\bf a}; \draw (10,-5.5) edge[very thick] (11,-6.5);
\node at (5.9,-6) {$\S\circ h$};
\node at (9,-8.7) {$\underbrace{\hspace*{1.6cm}}$};
\node (ceq) at (9,-9.7) {$c^h\(u\)=c^h\(v\)$};
\node (heq) at (9,-12) {$u \stackrel{h}{\equiv} v$};
\draw (ceq.south) edge[->] (heq.north);

\draw (0,0) edge[->] (8.5,0);
\draw (0,0) edge[->] (0,3);
\node at (0.5,-1) {\bf a}; \draw (0,1.5) edge[very thick] (1,2.5);
\node at (1.5,-1) {\it b}; \draw (1,2.5) -- (2,2.5);
\node at (2.5,-1) {\it a}; \draw (2,2.5) -- (3,2.5);
\node at (3.5,-1) {\bf c}; \draw (3,2.5) edge[very thick] (4,0.5);
\node at (4.5,-1) {\it c}; \draw (4,0.5) -- (5,0.5);
\node at (5.5,-1) {\bf b}; \draw (5,0.5) edge[very thick] (6,1.5);
\node at (6.5,-1) {\it c}; \draw (6,1.5) -- (7,1.5);
\node at (7.5,-1) {\bf a}; \draw (7,1.5) edge[very thick] (8,0.5);
\node at (-1.1,1) {$\S\circ h$};
\node at (4,-1.7) {$\underbrace{\hspace*{3.2cm}}$};
\node (u) at (4,-2.3) {$u$};

\draw (u.south) edge[->,out=270,in=90] (9,-4);
\draw (v.south) edge[->,out=270,in=90] (9.5,-4);
\node at (9.25,-3) {$h$-contraction};
\end{tikzpicture}

\small Active letters are shown in bold font.
\caption{$h$-equivalence}\label{fig:equiv}
\end{figure}
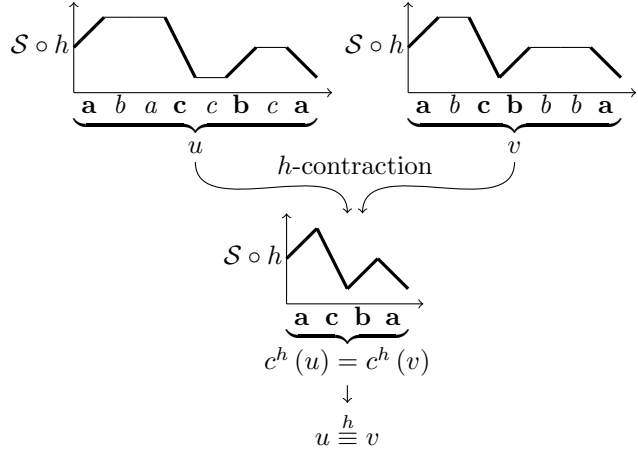

\begin{define}[$h$-equivalence]
Let $h \in A_\sharp^*$ be a history, and $u \in A_\sharp^\infty$ and $v \in A_\sharp^\infty$ be two words, possibly drawn at random. We say $u$ and $v$ are $h$-equivalent, written $u\stackrel{h}{\equiv} v$, if they almost surely have the same $h$-contractions.
\end{define}

Two words are $h$-equivalent if their contracted forms yield the same trajectories, as illustrated in Figure \ref{fig:equiv}. Since contraction removes only passive letters, the words themselves give similar trajectories, but with pauses inserted at different moments, during which the trajectory is constant.

\begin{prop} \label{thm:equivalence}
Let $h \in A_\sharp^*$ be a history, $q \in [0,1]$, and $u \in A_\sharp^\infty$. We have that
$$c^h\(u\)\stackrel{h}{\equiv}u \quad \mbox{and} \quad e_q^h\(u\)\stackrel{h}{\equiv}u,$$
and therefore that
\begin{align} e_q^h\(c^h\(u\)\)\stackrel{h}{\equiv}u. \label{eq:comp-equiv}\end{align}
Additionally, if $u \in A_\sharp^*$, $v \in A_\sharp^*$ and $u \stackrel{h}{\equiv} v$, then
\begin{align} \S \circ h \circ u = \S \circ h \circ v. \label{eq:state-equiv}\end{align}
Finally, under the same assumptions,
\begin{align} c^{h \cdot u} = c^{h \cdot v} \quad \mbox{and} \quad e_q^{h \cdot u} = e_q^{h \cdot v}. \label{eq:exp-equiv}\end{align}
\end{prop}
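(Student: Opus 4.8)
The relation $\stackrel{h}{\equiv}$ is, by definition, equality (almost surely) of the $h$-contraction, hence an equivalence relation; the two displayed equivalences are the substance, and \eqref{eq:comp-equiv} then follows formally by transitivity. The plan is to build everything on a single \emph{streaming} observation: both $c^h$ and $e_q^h$ read their input from left to right, emit output from left to right, and at every step consult only the current context $\S \circ h$, whose history is updated by $a$ exactly when an active letter $a$ is kept (resp.\ emitted). Because of this, each statement reduces to an induction processing one input letter at a time, and infinite words are handled by noting that the $n$-th emitted letter depends only on a finite prefix.

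First I would dispatch $c^h(u) \stackrel{h}{\equiv} u$: it asserts $c^h(c^h(u)) = c^h(u)$, which is exactly idempotence of contraction (already noted in the text, or a one-line induction on $u$). For $e_q^h(u) \stackrel{h}{\equiv} u$, i.e.\ $c^h(e_q^h(u)) = c^h(u)$ almost surely, I would argue blockwise: before emitting each input letter $a$, expansion inserts a geometric (hence a.s.\ finite) run of letters drawn from $\Inact_h$, all passive in the current context $\S \circ h$; contraction then deletes exactly this run --- leaving the state, and therefore the history, unchanged --- and treats $a$ itself precisely as $c^h$ would when reading $u$ directly. Keeping the two histories synchronised after each emission ($h \cdot a$ when $a$ is active, $h$ otherwise) closes the induction, so the two contractions agree letter by letter.

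For \eqref{eq:state-equiv} I would isolate the lemma $\S \circ h \circ u = \S \circ h \circ c^h(u)$ for finite $u$: passive letters fix $\S \circ h$, while active letters act identically whether or not the intervening passive letters are present, a clean induction on $\size{u}$ using $\S \circ h \circ a = \S \circ \(h \cdot a\)$ in the active case. Since $u \stackrel{h}{\equiv} v$ gives $c^h(u) = c^h(v)$, applying the lemma on both sides yields $\S \circ h \circ u = \S \circ h \circ v$. Finally, \eqref{eq:exp-equiv} follows from a second structural lemma: the operators $c^g$ and $e_q^g$ depend on the history $g$ only through the state $\S \circ g$, because at each step the sets $\Act_g$, $\Inact_g$ and the distribution $D_q$ they consult are functions of $\S \circ g$ alone; an induction on the word being processed then gives $c^{g_1} = c^{g_2}$ and $e_q^{g_1} = e_q^{g_2}$ whenever $\S \circ g_1 = \S \circ g_2$, and \eqref{eq:state-equiv} supplies exactly this with $g_1 = h \cdot u$, $g_2 = h \cdot v$.

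The main obstacle I anticipate is the second equivalence: making the blockwise matching rigorous for \emph{infinite} random words requires checking that the inserted passive runs are almost surely finite (guaranteed by $\sharp \in \Act_h$ with $D_q(\sharp) = q$, barring the degenerate $q = 0$ with no genuinely active letter) and that the history bookkeeping remains synchronised across the interleaving of kept and dropped letters. The remaining three parts are routine inductions once the streaming viewpoint and the two structural lemmas are in place.
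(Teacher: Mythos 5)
Your proof is correct, and it is in substance exactly the argument the paper leaves implicit: Property~4 is stated with no proof beyond the remark that ``these results are straightforward,'' and the straightforward route is precisely your streaming induction (idempotence of $c^h$; blockwise cancellation of the inserted passive runs, which fix $\S \circ h$; the lemma $\S \circ h \circ u = \S \circ h \circ c^h\(u\)$; and the observation that $c^g$ and $e^g_q$ depend on $g$ only through $\S \circ g$, since $\Act_g$ and $\Inact_g$ are functions of the bounding set alone). Two of your side remarks are worth keeping explicit: your reading of expansion as updating the history to $h \cdot a$ after each emitted letter is the intended semantics --- the paper's recursive display literally keeps the superscript $h$ fixed, a typo confirmed by the proof of Property~6, which conditions on $\Inact_{h \cdot \word{a}{1}{k-1}}$ --- and your degenerate caveat is genuine: for $q=0$ with $\Act_h \cap A = \emptyset$ the expansion inserts passive letters forever and $e^h_q\(u\) \stackrel{h}{\equiv} u$ can fail, though this is harmless for the paper since the algorithm only ever invokes $q = 2^{-m} > 0$.
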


These results are straightforward, but will be used throughout the rest of this section to better analyse the effects of contraction and expansion. Note that the reciprocal of (2) is not true: two trajectories ending in the same state are not necessarily equivalent.

We now justify the above statement that expansion somewhat reconstructs contracted words.

\begin{prop} \label{thm:shuffling}
Let $u \sim D_q\product\N$ and $h \in A_\sharp^*$. We have that $e_q^h\(c^h\(u\)\) \sim  D_q\product\N$.
\end{prop}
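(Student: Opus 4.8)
The plan is to peel off the first active letter of the source word and to verify that the block-plus-tail structure produced by $c^h$ followed by $e^h_q$ is exactly the one carried by a fresh $D_q\product\N$ word, and then to promote this single peeling into a statement about the whole infinite word. Write $u = p_1 \cdots p_N\, a\, u'$, where $a$ is the first letter of $u$ lying in $\Act_h$ and $p_1,\ldots,p_N$ are the passive letters preceding it. Since $u \sim D_q\product\N$ is i.i.d., the memorylessness of i.i.d.\ sequences gives the following description: the count $N$ is geometric, with $\P\{N=k\} = D_q\(\Inact_h\)^k\, D_q\(\Act_h\)$; conditionally on $N$ the letters $p_i$ are i.i.d.\ of law $D_q\(\cdot\mid\Inact_h\)$; the letter $a$ has law $D_q\(\cdot\mid\Act_h\)$; and the tail $u'$ is again distributed as $D_q\product\N$ and is independent of the whole block $\(N,p_1,\ldots,p_N,a\)$. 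The first active letter is a.s.\ reached because $\sharp\in\Act_h$ carries mass $q>0$; I would treat $q=0$ separately or note that it is excluded in the intended use.

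Next I would push contraction and expansion through this decomposition. As the $p_i$ are passive they leave $\S\circ h$ unchanged, so $c^h\(u\) = a\cdot c^{h\cdot a}\(u'\)$ by definition of contraction. Unrolling the expansion recursion on $a\cdot c^{h\cdot a}\(u'\)$, the output begins with a geometric run $q_1\cdots q_M$ of freshly drawn passive letters, then emits $a$, then continues as $e^{h\cdot a}_q\(c^{h\cdot a}\(u'\)\)$; here $M$ is geometric with the same parameter $D_q\(\Act_h\)$ and the $q_i$ are i.i.d.\ $D_q\(\cdot\mid\Inact_h\)$, all produced by the expansion's own randomness and hence independent of $a$ and of $u'$. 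Thus the leading block $\(M,q_1,\ldots,q_M,a\)$ of $e^h_q\(c^h\(u\)\)$ has exactly the same law as the leading block $\(N,p_1,\ldots,p_N,a\)$ of a fresh $D_q\product\N$ word.

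It then remains to identify the tails and to iterate. The crucial observation is that, for every fixed value of $a$, the statement being proved, applied with history $h\cdot a$, gives $e^{h\cdot a}_q\(c^{h\cdot a}\(u'\)\)\sim D_q\product\N$; because this law does not depend on the value of $a$ and $u'$ is independent of $a$, the tail is distributed as $D_q\product\N$ and is independent of the leading block, matching exactly the independence present in a fresh i.i.d.\ word. This establishes a ``first-block'' recursion satisfied both by the law of $e^h_q\(c^h\(u\)\)$ and by $D_q\product\N$.

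The main obstacle is making this recursion rigorous over \emph{infinite} words rather than a single peel. I would phrase it as an induction on $n$, proving that the law of the prefix of $e^h_q\(c^h\(u\)\)$ up to and including its $n$-th active letter coincides, uniformly in $h$, with that of a fresh $D_q\product\N$ word; since the finite-dimensional marginals determine a distribution on $A_\sharp\product\N$, equality of all such prefixes yields the claim. The two points requiring genuine care are precisely \textbf{(a)} that the tail law is independent of the peeled letter $a$, which is what lets the induction close without dragging along a conditioning on $a$, and \textbf{(b)} the almost-sure finiteness of every gap between consecutive active letters, which rests on $D_q\(\Act_h\)\ge q$.
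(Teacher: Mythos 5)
Your proposal is correct, and it takes a genuinely different route from the paper's. The paper proves the property letter by letter: setting $v = e_q^h(c^h(u))$, it factors $\P\{v_{1\rightarrow l} = a_{1\rightarrow l}\}$ into one-step conditionals and shows each equals $D_q(a_k)$ by splitting on whether $v_k$ is passive --- inserted by expansion, with conditional law $D_q(\cdot\mid\Inact_{h\cdot a_{1\rightarrow k-1}})$ and total mass $D_q(\Inact_{h\cdot a_{1\rightarrow k-1}})$ --- or active --- a surviving letter of $u$, with law $D_q(\cdot\mid\Act_{h\cdot a_{1\rightarrow k-1}})$ and mass $D_q(\Act_{h\cdot a_{1\rightarrow k-1}})$ --- so that the mixture recombines to $D_q(a_k)$; conditioning on the realized prefix makes the history deterministic at every step, which silently dispenses with both the self-reference and the passage to infinite words. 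You instead decompose at active letters into renewal blocks, match the block law (a geometric run of $D_q(\cdot\mid\Inact_h)$-letters followed by a $D_q(\cdot\mid\Act_h)$-letter) against that of a fresh word, and close a uniform-in-$h$ induction on the number of active letters, finishing with a finite-dimensional-marginal argument. Your version is longer, but it makes explicit the memorylessness and independence facts that the paper's identifications (``$v_k$ is passive iff it was inserted,'' ``kept letters carry the law conditioned to being active'') invoke tacitly; it yields the renewal description of the output word as a byproduct; and it is candid about two points the paper passes over: that the tail law must not depend on the peeled letter $a$ (which is what lets the induction close without conditioning on $a$), and that active letters recur almost surely, which rests on $D_q(\Act_h) \geq q > 0$ --- a hypothesis the paper never states but which its own proof also needs for the expansion to emit anything (harmless in context, since the algorithm only uses $q = 2^{-m} > 0$). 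Finally, your reading of the expansion recursion --- the history advances to $h \cdot a$ once the active letter $a$ is emitted --- is the intended one: it is exactly what the paper's own proof uses when it conditions inserted letters on $\Inact_{h \cdot a_{1\rightarrow k-1}}$, even though the displayed definition of $e^h_q$ literally keeps the superscript $h$ in both branches.
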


\begin{proof}
Let $v = e_q^h\(c^h\(u\)\)$, $l\in\N$ and $\word{a}{1}{l} \in A_\sharp^l$. We have that $$\P\{\word{v}{1}{l} = \word{a}{1}{l}\} = \prod_{k = 1}^l \P\{v_k = a_k \mid \word{v}{1}{k-1} = \word{a}{1}{k-1}\}.$$ Showing that, for all $k\in\N$, \begin{align}\P\{v_k = a_k \mid \word{v}{1}{k-1} = \word{a}{1}{k-1}\} = D_q\(a_k\) \label{eq:indep}\end{align} would yield that $\word{v}{1}{l} \sim D_q\product{l}$. This being true for all $l\in\N$, we would in turn have that $$v \sim D_q\product{\N}.$$

We now show \eqref{eq:indep} by differentiating the two cases where $v_k$ is or is not in $\Inact_{h \cdot \word{a}{1}{k-1}}$:
\begin{align*}
& \P\{v_k = a_k \mid \word{v}{1}{k-1} = \word{a}{1}{k-1}\} \\
& \makebox[2em][c]{$=$}\ \P\{v_k = a_k \mid v_k \in \Inact_{h \cdot \word{a}{1}{k-1}} \bigwedge \word{v}{1}{k-1} = \word{a}{1}{k-1}\} \\
& \qquad \qquad \times \P\{v_k \in \Inact_{h \cdot \word{a}{1}{k-1}} \mid \word{v}{1}{k-1} = \word{a}{1}{k-1}\} \\
& \makebox[2em][r]{$+$}\ \P\{v_k = a_k \mid v_k \in \Act_{h \cdot \word{a}{1}{k-1}} \bigwedge \word{v}{1}{k-1} = \word{a}{1}{k-1}\} \\
& \qquad \qquad \times \P\{v_k \in \Act_{h \cdot \word{a}{1}{k-1}} \mid \word{v}{1}{k-1} = \word{a}{1}{k-1}\}.
\end{align*}

Notice that $v_k$ is in $\Inact_{h \cdot \word{a}{1}{k-1}}$ if and only if it was added during expansion. By definition, this occurs with probability $D_q\(\Inact_{h \cdot \word{a}{1}{k-1}}\)$, so we have that $$\P\{v_k \in \Inact_{h \cdot \word{a}{1}{k-1}} \mid \word{v}{1}{k-1} = \word{a}{1}{k-1}\} = D_q\(\Inact_{h \cdot \word{a}{1}{k-1}}\)$$ and $$\P\{v_k \in \Act_{h \cdot \word{a}{1}{k-1}} \mid \word{v}{1}{k-1} = \word{a}{1}{k-1}\} = D_q\(\Act_{h \cdot \word{a}{1}{k-1}}\)\makebox[0em]{.}$$

Consider the case where $v_k$ is passive. It was inserted during expansion, and its distribution is therefore $\left.D_q\right|_{\Inact_{h \cdot \word{a}{1}{k-1}}}$: \begin{align*} & \P\{v_k = a_k \mid v_k \in \Inact_{h \cdot \word{a}{1}{k-1}} \bigwedge \word{v}{1}{k-1} = \word{a}{1}{k-1}\} \\ & \makebox[2em][c]{$=$} D_q\(a_k \mid \Inact_{h \cdot \word{a}{1}{k-1}}\).\end{align*}

Similarly, if $v_k$ is known to be active, then it was already in $u$ and was not removed during contraction. Its distribution was therefore $D_q$ conditioned to being active, i.e. $\left.D_q\right|_{\Act_{h \cdot \word{a}{1}{k-1}}}$: \begin{align*} & \P\{v_k = a_k \mid v_k \in \Act_{h \cdot \word{a}{1}{k-1}} \bigwedge \word{v}{1}{k-1} = \word{a}{1}{k-1}\} \\ & \makebox[2em][c]{$=$} D_q\(a_k \mid \Act_{h \cdot \word{a}{1}{k-1}}\).\end{align*}

Combining all these results gives that
\begin{align*}
& \P\{v_k = a_k \mid \word{v}{1}{k-1} = \word{a}{1}{k-1}\} \\
& \makebox[2em][c]{$=$}\ D_q\(a_k \mid \Inact_{h \cdot \word{a}{1}{k-1}}\) \times D_q\(\Inact_{h \cdot \word{a}{1}{k-1}}\) \\
& \makebox[2em][r]{$+$}\ D_q\(a_k \mid \Act_{h \cdot \word{a}{1}{k-1}}\) \times D_q\(\Act_{h \cdot \word{a}{1}{k-1}}\) \\
& \makebox[2em][c]{$=$} D_q\(a_k\).
\end{align*}

This concludes the proof.
\end{proof}

Let $u \sim D_q\product\N$, and $\word{u}{1}{\sharp}$ be the same word truncated after the first appearance of the letter $\sharp$. Call $G_q$ the distribution of $\word{u}{1}{\sharp}$.

\begin{prop} \label{thm:sharp-shuffling}
Let $\word{u}{1}{\sharp} \sim G_q$ and $h \in A_\sharp^*$. We have that $e_q^h\(c^h\(\word{u}{1}{\sharp}\)\) \sim G_q$.
\end{prop}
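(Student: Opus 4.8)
The plan is to deduce this from Property~\ref{thm:shuffling} by showing that the composite map $e_q^h \circ c^h$ commutes with truncation at the first $\sharp$. Concretely, I would couple everything to a single infinite word $u \sim D_q\product\N$, let $\word{u}{1}{\sharp}$ be its prefix up to and including the first $\sharp$, and establish the almost-sure identity
\begin{align*}
e_q^h\(c^h\(\word{u}{1}{\sharp}\)\) = \(e_q^h\(c^h\(u\)\)\)_{1\rightarrow\sharp},
\end{align*}
where the right-hand side denotes the prefix of $e_q^h\(c^h\(u\)\)$ up to its first $\sharp$.

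This identity rests on two structural observations, both consequences of the fact that $\sharp$ is always active while contraction removes and expansion inserts only \emph{passive} letters, which lie in $A$ and are therefore never equal to $\sharp$. First, since $\sharp \in \Act_{h'}$ for every history $h'$, the map $c^h$ never deletes the $\sharp$ of $\word{u}{1}{\sharp}$ nor creates a new one; hence $c^h\(\word{u}{1}{\sharp}\)$ is a word of active letters terminated by a single $\sharp$, and it coincides with $\(c^h\(u\)\)_{1\rightarrow\sharp}$. Second, $e_q^h$ keeps the active letters in order (in particular the terminal $\sharp$) and only interleaves passive letters drawn from $A$; moreover the history advances only when an active letter is emitted, so the part of the expansion that produces the suffix after the first $\sharp$ does not influence the expansion of the prefix. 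Consequently $\(e_q^h\(w\)\)_{1\rightarrow\sharp} = e_q^h\(\word{w}{1}{\sharp}\)$ for $w = c^h\(u\)$, and the passive insertions occurring before the first $\sharp$ are independent of everything that follows.

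Chaining the two observations yields the displayed identity. It then suffices to invoke Property~\ref{thm:shuffling}, which gives $e_q^h\(c^h\(u\)\) \sim D_q\product\N$; truncating a $D_q\product\N$-word at its first $\sharp$ produces, by the very definition of $G_q$, a sample of $G_q$. Therefore $e_q^h\(c^h\(\word{u}{1}{\sharp}\)\) \sim G_q$, as claimed.

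The main obstacle is making the commutation with truncation fully rigorous rather than merely plausible from Figure~\ref{fig:con-exp}. One must argue carefully that the first $\sharp$ of the fully expanded infinite word is exactly the one inherited from the prefix $\word{u}{1}{\sharp}$, and that the marginal law of that prefix under $e_q^h \circ c^h$ is unaffected by the remainder of $u$; this is where the facts that $\sharp$ is active, that inserted and removed letters belong to $A$, and that histories update only on active outputs (the bookkeeping of equation~\eqref{eq:exp-equiv} in Property~\ref{thm:equivalence} being relevant here) all come into play. A minor point to dispatch is that $u \sim D_q\product\N$ almost surely contains a $\sharp$, so that the truncations are well defined.
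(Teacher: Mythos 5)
Your proposal is correct and follows essentially the same route as the paper's own proof: couple to a single infinite word $u \sim D_q\product\N$, observe that $\sharp$ is always active and hence neither removed by $c^h$ nor inserted by $e_q^h$, conclude that $e_q^h\(c^h\(\word{u}{1}{\sharp}\)\)$ is $e_q^h\(c^h\(u\)\)$ truncated at its first $\sharp$, and finish with Property~\ref{thm:shuffling}. Your extra care about the prefix expansion being unaffected by the suffix merely spells out what the paper leaves implicit.
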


\begin{proof}
Let $u \sim D_q\product\N$ such that $\word{u}{1}{\sharp}$ is $u$ truncated after the first $\sharp$.

By definition, $\sharp$ is always active, and is therefore neither removed when contracting nor inserted when expanding. As a result, $e_q^h\(c^h\(\word{u}{1}{\sharp}\)\)$ is $e_q^h\(c^h\(u\)\)$ truncated after the first $\sharp$. Combining this and the fact that, according to Property \ref{thm:shuffling}, $e_q^h\(c^h\(u\)\) \sim D_q\product\N$, we have that $e_q^h\(c^h\(\word{u}{1}{\sharp}\)\) \sim G_q$.
\end{proof}

This property justifies the claim that expansion corresponds to reconstructing (in distribution) a word that has been contracted, as this is indeed the case when the original word is drawn according to $G_q$.

For $n\in\N$, consider a sequence of words $\(u^m\)_{m\in\inter{1}{n}}$, independently distributed according to $G_{2^{-m}}$, and call $\G_n$ the distribution of $$u^n \cdot u^{n-1} \cdot \ldots \cdot u^1.$$

We now define the $\G$-expansion of a word. The aim is once again to rebuild what a word ``could have been'' before it was contracted, supposing it was initially drawn according to $\G_n$ for some $n\in\N$.

Formally, given a word $v$ finishing with its $n$th $\sharp$, we first split it into a sequence of words $v^n \cdot \ldots \cdot v^1$ such that each $v^m$ contains exactly one $\sharp$, which is its last letter. Notice that this decomposition is unique. We then define the $\G$-expansion of $v$ as $$e_\G\(v\) = e_{2^{-n}}^\epsilon \(v^n\) \cdot \ldots \cdot e_{2^{-m}}^{v^n \cdot \ldots \cdot v^{m+1}} \(v^m\) \cdot \ldots \cdot e_{\frac{1}{2}}^{v^n \cdot \ldots \cdot v^2} \(v^1\),$$ or $\epsilon$, if $n=0$.

\begin{prop} \label{thm:G-shuffling}
Let $u \sim \G_n$. We have that $$e_\G\(c^\epsilon\(u\)\) \sim \G_n$$ and $$e_\G\(c^\epsilon\(u\)\) \stackrel{\epsilon}{\equiv} u.$$
\end{prop}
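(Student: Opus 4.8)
The plan is to reduce both assertions to block-wise statements, exploiting the fact that contraction is compatible with concatenation and that $\sharp$ is never removed or inserted. Write $u = u^n \cdots u^1$ with the $u^m \sim G_{2^{-m}}$ independent, each $u^m$ containing a single $\sharp$ located at its end. First I would record the compositionality of contraction: reading a concatenation from left to right and using $c^h(x) \stackrel{h}{\equiv} x$ together with \eqref{eq:exp-equiv} gives $c^h(xy) = c^h(x) \cdot c^{h \cdot c^h(x)}(y)$. Since $\sharp$ is always active it survives contraction, so $c^\epsilon(u)$ still contains exactly $n$ occurrences of $\sharp$ and ends with one; hence its canonical block decomposition is $c^\epsilon(u) = w^n \cdots w^1$ with $w^m = c^{h^{(m)}}(u^m)$ and $h^{(m)} = c^\epsilon(u^n \cdots u^{m+1}) = w^n \cdots w^{m+1}$. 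Consequently the $m$-th factor that $e_\G$ produces on $c^\epsilon(u)$ is exactly $\tilde w^m := e_{2^{-m}}^{h^{(m)}}(c^{h^{(m)}}(u^m))$, so that $e_\G(c^\epsilon(u)) = \tilde w^n \cdots \tilde w^1$. Both claims then become statements about these factors.

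For the $\epsilon$-equivalence I would show $c^\epsilon(e_\G(c^\epsilon(u))) = c^\epsilon(u)$, which combined with $c^\epsilon(u) \stackrel{\epsilon}{\equiv} u$ (Property \ref{thm:equivalence}) and transitivity of $\stackrel{\epsilon}{\equiv}$ gives the result. By Property \ref{thm:equivalence}, $e_{2^{-m}}^{h^{(m)}}(w^m) \stackrel{h^{(m)}}{\equiv} w^m$, so idempotence of contraction yields $c^{h^{(m)}}(\tilde w^m) = c^{h^{(m)}}(w^m) = w^m$. Contracting the concatenation $\tilde w^n \cdots \tilde w^1$ from left to right, I would argue by induction that the running history before the $m$-th factor equals $h^{(m)}$: the base case is $\epsilon$, and the inductive step uses that the factors already processed contract to $w^n \cdots w^{m+1} = h^{(m)}$. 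Each factor therefore contracts to $c^{h^{(m)}}(\tilde w^m) = w^m$, whence $c^\epsilon(e_\G(c^\epsilon(u))) = w^n \cdots w^1 = c^\epsilon(u)$, as desired.

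For the distributional claim I would show that $\tilde w^n, \ldots, \tilde w^1$ are independent with $\tilde w^m \sim G_{2^{-m}}$, so that $\tilde w^n \cdots \tilde w^1 \sim \G_n$. Fix $m$ and condition on the $\sigma$-algebra generated by $u^n, \ldots, u^{m+1}$ together with the internal randomness of the expansions applied to blocks $n$ through $m+1$; with respect to it, all of $\tilde w^n, \ldots, \tilde w^{m+1}$ and the history $h^{(m)} = c^\epsilon(u^n \cdots u^{m+1})$ are measurable, while $u^m$ and the randomness of the $m$-th expansion are independent of it. Conditionally, $h^{(m)}$ is a fixed word and $u^m \sim G_{2^{-m}}$, so Property \ref{thm:sharp-shuffling} gives $\tilde w^m = e_{2^{-m}}^{h^{(m)}}(c^{h^{(m)}}(u^m)) \sim G_{2^{-m}}$. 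The \emph{main obstacle} is precisely that $h^{(m)}$ is random; the point of this conditioning is that the resulting conditional law $G_{2^{-m}}$ does not depend on the conditioning, so $\tilde w^m$ is in fact independent of $(\tilde w^n, \ldots, \tilde w^{m+1})$ and distributed as $G_{2^{-m}}$. Iterating this from $m = n$ down to $m = 1$ establishes the joint law $\G_n$, completing the proof.
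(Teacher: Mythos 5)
Your proof is correct and takes essentially the same route as the paper's: the same block decomposition $u = u^n \cdots u^1$, the concatenation identity \eqref{eq:c-concat} for contraction, Property \ref{thm:sharp-shuffling} applied blockwise for the distributional claim, and the blockwise contraction equalities merged via \eqref{eq:c-concat} for the $\epsilon$-equivalence. The only differences are matters of bookkeeping and rigor: you keep the expansion histories in contracted form $h^{(m)} = w^n \cdots w^{m+1}$ where the paper rewrites them as $u^n \cdots u^{m+1}$ (the two agree by \eqref{eq:exp-equiv}), and you make explicit the conditioning argument that justifies applying Property \ref{thm:sharp-shuffling} with a \emph{random} history and that delivers the independence of the blocks $\tilde w^m$ needed to conclude the joint law is $\G_n$ --- a point the paper's proof leaves implicit.
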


\begin{proof}
Notice that, by definition of contraction, \begin{align} c^h\(v\cdot w\) = c^h\(v\) \cdot c^{h \cdot v}\(w\) \label{eq:c-concat} \end{align} for any finite words $v$, $w$ and $h$. Let $u^n \cdot u^{n-1} \cdot \ldots \cdot u^1$ be the unique decomposition of $u$ such that each $u^m$ ends with its unique sharp. We have that $$e_\G\(c^\epsilon\(u\)\) = e_\G\( c^\epsilon\(u^n\) \cdot \ldots \cdot c^{u^n \cdot \ldots \cdot u^2} \(u^1\) \).$$ Since each $c^{u^n \cdot \ldots \cdot u^{m+1}} \(u^m\)$ must also finish with its unique $\sharp$, the definition of $e_\G$ gives that this is equal to
\begin{align*}
& e_{2^{-n}}^\epsilon \( c^\epsilon\(u^n\) \) \cdot \ldots \cdot e_{\frac{1}{2}}^{c^\epsilon\(u^n\) \cdot \ldots \cdot c^{u^n \cdot \ldots \cdot u^3}\(u^2\)} \( c^{u^n \cdot \ldots \cdot u^2} \(u^1\) \) \\
& = e_{2^{-n}}^\epsilon \( c^\epsilon\(u^n\) \) \cdot \ldots \cdot e_{\frac{1}{2}}^{c^\epsilon\(u^n \cdot \ldots \cdot u^2\)} \( c^{u^n \cdot \ldots \cdot u^2} \(u^1\) \) \\
& = e_{2^{-n}}^\epsilon \( c^\epsilon\(u^n\) \) \cdot \ldots \cdot e_{\frac{1}{2}}^{u^n \cdot \ldots \cdot u^2} \( c^{u^n \cdot \ldots \cdot u^2} \(u^1\) \).
\end{align*}
The last line is a consequence of \eqref{eq:exp-equiv}, since $$\forall m \in \inter{1}{n},\ c^\epsilon\(u^n \cdot \ldots \cdot u^m\) \stackrel{\epsilon}{\equiv} u^n \cdot \ldots \cdot u^m.$$

Let $$v^m = e_{2^{-m}}^{u^n \cdot \ldots \cdot u^{m+1}} \( c^{u^n \cdot \ldots \cdot u^{m+1}} \(u^m\) \)$$ for all $m\in\inter{1}{n}$, such that $$e_\G\(c^\epsilon\(u\)\) = v^n \cdot \ldots \cdot v^m \cdot \ldots \cdot v^1.$$ By definition, every $u^m$ is distributed according to $G_{2^{-m}}$. Property \ref{thm:sharp-shuffling} therefore gives us that every $v^m$ is also distributed according to $G_{2^{-m}}$, which in turn implies that $v$ is distributed according to $\G_n$.

Using \eqref{eq:comp-equiv}, we also have that, for all $m \in \inter{1}{n}$, $v^m$ and $u^m$ are $\(u^n \cdot \ldots \cdot u^{m+1}\)$-equivalent, that is to say $$c^{u^n \cdot \ldots \cdot u^{m+1}} \(v^m\) = c^{u^n \cdot \ldots \cdot u^{m+1}} \(u^m\).$$ By concatenating and merging these using \eqref{eq:c-concat}, we obtain that $$c^\epsilon\(v^n \cdot \ldots \cdot v^1\) = c^\epsilon\(u^n \cdot \ldots \cdot u^1\),$$ i.e. $e_\G\(c^\epsilon\(u\)\) \stackrel{\epsilon}{\equiv} u$.
\end{proof}

Consider a sequence of words $\(u^m\)_{m\in\N}$, independently distributed such that for all $m\in\N$, $u^m \sim G_{2^{-m}}$. Define the sequence $w^n$ recursively such that $w^0 = \epsilon$ and $$w^{n+1} = c^\epsilon\(u^{n+1} \cdot e_\G\(w^n\)\).$$

This is the basis for our CFTP algorithm with oracle skipping: if $w^n$ is not a coupling word, then compute $w^{n+1}$, repeating the operation until a coupling word is found.

\begin{prop} \label{thm:inclusion}
Using the above notation, we have that, for all $m < n$, $\S \circ w^n \subseteq \S \circ w^m$. In other words, the trajectories obtained at each iteration are embedded in one another.
\end{prop}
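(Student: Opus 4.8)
The plan is to reduce the statement to consecutive indices, rewrite both sides so that they read the \emph{same} word from two nested starting sets, and then isolate the single structural fact that closes the argument. First I would note that, by transitivity of $\subseteq$, it suffices to prove $\S \circ w^{n+1} \subseteq \S \circ w^n$ for every $n$; chaining these inclusions yields $\S \circ w^n \subseteq \S \circ w^m$ for all $m < n$. (Equivalently, a short induction on $n$ works, the base case $\S \circ w^1 \subseteq \S \circ w^0 = \S$ being immediate since every element of $\B$ is a subset of $\S$.)

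For the consecutive step, I would unfold the recursion $w^{n+1} = c^\epsilon\(u^{n+1} \cdot e_\G\(w^n\)\)$. Since contraction preserves equivalence, Property \ref{thm:equivalence} gives $c^\epsilon\(u^{n+1} \cdot e_\G\(w^n\)\) \stackrel{\epsilon}{\equiv} u^{n+1} \cdot e_\G\(w^n\)$, so by \eqref{eq:state-equiv} (with $h = \epsilon$) and the definition of $\circ$ on concatenated words,
\[
\S \circ w^{n+1} = \S \circ \(u^{n+1} \cdot e_\G\(w^n\)\) = \(\S \circ u^{n+1}\) \circ e_\G\(w^n\).
\]
For the other side I would first check, by an easy induction on the construction (using that $\sharp$ is always active and is neither removed by $c^\epsilon$ nor inserted by $e_\G$), that $w^n$ is $\epsilon$-contracted and finishes with its $n$th $\sharp$, so that $e_\G\(w^n\)$ is well defined. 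Then, applying the expansion part $e_q^h\(u\)\stackrel{h}{\equiv}u$ of Property \ref{thm:equivalence} piecewise exactly as in the proof of Property \ref{thm:G-shuffling}, one gets $e_\G\(w^n\) \stackrel{\epsilon}{\equiv} w^n$, and \eqref{eq:state-equiv} again yields $\S \circ w^n = \S \circ e_\G\(w^n\)$.

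At this point the claim is equivalent to $\(\S \circ u^{n+1}\) \circ e_\G\(w^n\) \subseteq \S \circ e_\G\(w^n\)$. The inclusion $\S \circ u^{n+1} \subseteq \S$ is trivial, as $\S \circ u^{n+1}$ is an element of $\B$ and hence a subset of $\S$; so the result follows \emph{provided} $\circ$ is monotone for inclusion, i.e. $B \subseteq B' \Rightarrow B \circ a \subseteq B' \circ a$, applied letter by letter along $e_\G\(w^n\)$. I expect this monotonicity to be the main obstacle: it is precisely the ingredient \emph{not} contained in the bare definition of a bounding chain (that definition only relates $\cdot$ and $\circ$ via $x \in B \Rightarrow x \cdot a \in B \circ a$, which does not force order-preservation of $\circ$). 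The proof must therefore either carry monotonicity as a standing assumption on the bounding-chain operators considered here — it does hold for the operator used for independent sets — or verify it for the operator at hand. Once it is available, the three displayed identities combine at once to give $\S \circ w^{n+1} \subseteq \S \circ w^n$, completing the induction.
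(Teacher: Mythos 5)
Your proposal follows the paper's proof essentially step for step: reduce to consecutive indices, use the $\epsilon$-equivalence $w^{n+1} \stackrel{\epsilon}{\equiv} u^{n+1} \cdot e_\G\(w^n\)$ together with \eqref{eq:state-equiv} to rewrite $\S \circ w^{n+1} = \S \circ u^{n+1} \circ e_\G\(w^n\)$, use Property \ref{thm:G-shuffling} (with $w^n$ being its own $\epsilon$-contraction) to get $\S \circ e_\G\(w^n\) = \S \circ w^n$, and conclude from $\S \circ u^{n+1} \subseteq \S$. Your caveat about monotonicity is well taken rather than a deviation: the paper asserts the middle inclusion as a ``direct consequence'' of $\S \circ u^{n+1} \subseteq \S$, which indeed tacitly uses $B \subseteq B' \Rightarrow B \circ a \subseteq B' \circ a$ --- a property not forced by the bounding-chain definition ($x \in B \Rightarrow x \cdot a \in B \circ a$ says nothing about order-preservation) but satisfied by the operators actually used here, such as the $\langle B,D\rangle$ chain for independent sets --- so making it an explicit standing assumption, as you do, is a genuine refinement of the paper's argument.
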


\begin{proof}
It is enough to show that, for all $n\in\N$, $$\S \circ w^{n+1} \subseteq \S \circ w^n.$$ Since $w^{n+1}$ and $u^{n+1} \cdot e_\G\(w^n\)$ are $\epsilon$-equivalent, \eqref{eq:state-equiv} gives us that \begin{align*} \S \circ w^{n+1} & = \S \circ u^{n+1} \circ e_\G\(w^n\) \\ & \subseteq \S \circ e_\G\(w^n\), \end{align*} the inclusion being a direct consequence of the fact that $\S \circ u^{n+1} \subseteq \S$. Since $w^n$ is its own $\epsilon$-contraction, Property \ref{thm:G-shuffling} gives that $w^n$ and $e_\G\(w^n\)$ are also $\epsilon$-equivalent, and \eqref{eq:state-equiv} yields that $$\S \circ e_\G\(w^n\) = \S \circ w^n,$$ which concludes the proof.
\end{proof}

Let $$N = \inf\{\vphantom{A^A_A}n\in\N\mid\size{\S \circ w^n}=1\}$$ be the first iteration at which a coupling word is found.

\begin{thm}\label{thm:o-cftp}
Using the above notation, we have that, if there exists a coupling word for the bounding chain of $\A$, then $$\P\{N < +\infty\} = 1,$$ and $$\E{N} < +\infty,$$ i.e. the algorithm almost surely terminates, and does so after a finite expected number of iterations.

Furthermore, The unique element of $\S \circ w^N$ is distributed according to the stationary distribution $\pi$ of $\A$.
\end{thm}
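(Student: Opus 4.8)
The statement has two parts: termination (almost sure, with finite expectation) and correctness (the output is $\pi$-distributed). I would handle them via a single coupling argument that relates the skipping iterates $w^n$ to an ordinary bounded-CFTP run.

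The key observation is that Property \ref{thm:G-shuffling} lets me pretend the contracted words are full words. Concretely, I would argue that the sequence $\S \circ w^n$ has, at each step $n$, exactly the same distribution as the non-skipping bounding chain $\S \circ \word{U}{-n'}{-1}$ run over an appropriate (random, geometrically distributed) number of genuine letters $U_i \sim D$ drawn from the past. The bridge is the following: $w^{n+1} = c^\epsilon\(u^{n+1} \cdot e_\G\(w^n\)\)$, and by Property \ref{thm:G-shuffling} the expanded word $e_\G\(w^n\)$ is distributed according to $\G_n$ and is $\epsilon$-equivalent to $w^n$. Hence $u^{n+1} \cdot e_\G\(w^n\)$ is distributed according to $\G_{n+1}$, and contraction preserves the induced state set by \eqref{eq:state-equiv}. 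Therefore $\S \circ w^{n+1}$ is distributed exactly as $\S$ acted on by a word drawn from $\G_{n+1}$, i.e. as a concatenation of $n+1$ independent blocks, the $m$-th block being a $D_{2^{-m}}$-sequence truncated at its first $\sharp$. Since each $\sharp$ is passive on states (it leaves every $x$ and every $B$ unchanged), removing all the $\sharp$'s from such a word yields precisely an i.i.d.\ $D$-sequence of a random length; the number of real letters in block $m$ is geometric, independent across blocks.

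From this distributional identity I get both halves. For termination, note that by hypothesis there is a coupling word $\word{c}{1}{\ell}$ for the bounding chain, i.e. a finite word of positive $D\product\ell$-probability collapsing $\S \circ \cdot$ to a singleton. By the backwards-coupling structure of \eqref{eq:comp-equiv} and Property \ref{thm:inclusion}, once $\S \circ w^n$ is a singleton it stays a singleton. Because $\S \circ w^n$ equals in distribution the image of $\S$ under a random-length i.i.d.\ $D$-word whose length tends to infinity (each added block contributes at least one real letter with probability bounded below, and there are $n$ blocks), standard CFTP arguments (Theorem \ref{thm:cftp} and Property \ref{thm:couple-ergodic}) give $\P\{N<\infty\}=1$; the finite expectation of $N$ follows because the block lengths grow geometrically, so the probability that coupling has not occurred after $n$ blocks decays geometrically in the number of real letters, which is itself growing, yielding a summable tail for $N$.

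For correctness, I would invoke Theorem \ref{thm:cftp}: the unique element of $S_{\tau^b}$ in ordinary CFTP is $\pi$-distributed. The distributional identity established above says that $\S \circ w^N$ is the singleton produced by backward coupling of the bounding chain over a genuine i.i.d.\ $D$-sequence read from the past — precisely the quantity Theorem \ref{thm:cftp} and the bounding-chain refinement certify to be $\pi$-distributed, since $\S \cdot \word{u}{-i}{-1} \subseteq \S \circ \word{u}{-i}{-1}$ forces the bounding-chain singleton to coincide with the grand-coupling singleton. The main obstacle, and the step deserving the most care, is making the ``same distribution as a past-drawn i.i.d.\ word'' claim rigorous: one must verify that the \emph{stopping} rule $N$ (collapse to a singleton) commutes with the expansion/contraction correspondence, so that stopping in the skipped process corresponds to a well-defined backward-coupling time of the un-skipped process, with no bias introduced by the adaptive, state-dependent choice of which letters to skip. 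This is exactly where Property \ref{thm:G-shuffling} — that expansion reconstructs the correct conditional distribution — must be used to rule out any dependence between the event ``$w^n$ couples'' and the particular passive letters that were discarded.
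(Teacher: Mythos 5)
Your overall route is the same as the paper's: use Property \ref{thm:G-shuffling} inductively to see that $e_\G\(w^n\) \sim \G_n$, hence that $u^{n+1}\cdot e_\G\(w^n\) \sim \G_{n+1}$ and $\S\circ w^{n+1}$ is the image of $\S$ under a $\G_{n+1}$-word; and, for termination, exploit the fresh block drawn at each iteration. The termination half is essentially correct but looser than necessary: the paper obtains the geometric tail directly by observing that $u^k$ begins with the coupling word $u$ with probability at least $\(\frac12\)^{\size u} P_u$, where $P_u = D\product{\size u}\(u\)$ (no $\sharp$ among the first $\size u$ letters, then the letters of $u$), independently at each iteration, whence $\E{N} \leq 2^{\size u}/P_u$. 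Beware also that your formulation via the total length tending to infinity would need an occurrence of the coupling word in the \emph{middle} of the word to collapse $\S \circ {}\cdot{}$, which requires a monotonicity of $\circ$ in its set argument that the definition of a bounding chain does not provide; the prefix-of-the-fresh-block device sidesteps this.

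The genuine gap is in the correctness half, at precisely the step you flag and then defer. The identity ``$\S \circ w^n$ is distributed as $\S$ acted on by a $\G_n$-word'' holds for each \emph{fixed} $n$, but the output is read at the random index $N$, which is measurable with respect to the very letters whose law you invoke: conditionally on $N=n$, i.e.\ on coupling having occurred at block $n$ and not before, the word is no longer $\G_n$-distributed, so Theorem \ref{thm:cftp} cannot be applied as a black box, and repeating that Property \ref{thm:G-shuffling} ``reconstructs the conditional distribution'' does not by itself remove this bias. The paper closes the gap with an explicit construction that your proposal would still need: decompose $e_\G\(w^N\)$ into its blocks, prepend an infinite i.i.d.\ tail to obtain $\word{\widetilde w}{-\infty}{0} \sim D\product{\Z^-}$, prove the \emph{pathwise} (not merely distributional) fact that every coupling window satisfies $\S \circ \word{\widetilde w}{-k}{0} = \{x_{\mathrm{out}}\}$ --- treating separately the cases $k \geq l$ and $k < l$, where $l$ is the length of the realized expanded word --- and then, for each $\varepsilon > 0$, compare $x_{\mathrm{out}}$ with a stationary copy $Y = y \cdot \word{\widetilde w}{-t_\varepsilon}{0}$, $y \sim \pi$, driven by the same letters, which gives $\P\{x_{\mathrm{out}} \neq Y\} \leq \varepsilon$ and hence the total-variation bound. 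This argument never computes the conditional law of the word given $N$; your proposal names the obstacle but supplies no substitute for this step, so as written the claim that the output is $\pi$-distributed remains unproven.
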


Note that, due to Property \ref{thm:inclusion}, for every $n \geq N$, $\S \circ w^n$ is a singleton that contains the same state, distributed according to $\pi$.

\begin{proof}
The proof is fundamentally the same as that of Theorem \ref{thm:cftp}.

For the first part of the theorem, let $u \in A^*$ be a coupling word for the bounding chain of $\A$, and $P_u = D\product{\size{u}}\(u\)$ be the probability of drawing a word with prefix $u$. At each iteration, the probability of $u^k$ having $u$ as a prefix is the probability of there being no $\sharp$ in the first $\size{u}$ letters, i.e. having $\size{u^k} > \size{u}$, times the probability of the prefix being $u$ knowing there are no $\sharp$, i.e. $P_u$. We therefore have that
\begin{align*}
\P\{ \mbox{$u$ prefix of $u^k$}\}
& = \P\{\size{u^k} > \size{u}\} \, P_u \\
& \geq \P\{\size{u^1} > \size{u}\} \, P_u \\
& \geq \(\frac{1}{2}\)^{\size{u}} \, P_u \\
& > 0.
\end{align*}
This implies we will almost surely draw a word $u^k$ that couples, at which point $w^k$ will also couple and the algorithm will stop. Furthermore, the expected number of iterations is finite, as it is upper-bounded by $$\frac{2^{\size{u}}}{P_u}.$$

We now show that the output state is distributed according to $\pi$. Call $x_{\mbox{out}}$ the unique element of $\S \circ w^N$. If we can show that, for all $\varepsilon > 0$ and all $x \in \S$, \begin{align}\size{\P\{x_{\mbox{out}} = x\} - \pi\(x\)} \leq \varepsilon, \label{eq:TV-CFTP}\end{align}then we are finished.

Decompose $e_\G\(w^N\)$ as $$e_\G\(w^N\) = u^N \cdot \sharp \cdot u^{N-1} \cdot \sharp \cdot \ldots \cdot \sharp \cdot u^1 \cdot \sharp,$$ with $u^m \in A^*$ for all $m$. Let $\word{u^\infty}{-\infty}{0} \sim D\product{\Z^-}$, and $$\word{\widetilde w}{-\infty}{0} = \word{u^\infty}{-\infty}{0} \cdot u^N \cdot u^{N-1} \cdot \ldots \cdot u^1.$$ We begin by showing that $\word{\widetilde w}{-\infty}{0} \sim D\product{\Z^-}$.

Consider a word $\word{v}{-\infty}{0} \sim D\product{\Z^-}$ and a family of random, independent instances $\(k_m\)_{m\in\inter{1}{N}}$ such that, for each $m$, $k_m$ is distributed according to a geometric distribution of parameter $2^{-m}$. Let $l_m = \sum_{i=1}^m k_i$ for all $m\in\inter{0}{N}$.

Since $u^m\cdot\sharp \sim G_{2^{-m}}$ for all $m$, the lengths of the $u^m$ are equal to the position of the first $\sharp$ in a sequence of letters i.i.d. according to $D_{2^{-m}}\product\N$, minus one. This is exactly the geometric distribution of parameter $2^{-m}$. In particular, we have that $$u^m \sim \word{v}{-l_m+1}{-l_{m-1}},$$ and therefore
\begin{align*}
\word{\widetilde w}{-\infty}{0} & = \word{u^\infty}{-\infty}{0} \cdot u^N \cdot \ldots \cdot u^1 \\
& \sim \word{v}{-\infty}{-l_N} \cdot \word{v}{-l_N+1}{-l_{N-1}} \cdot \ldots \cdot \word{v}{-l_1+1}{-l_0} \\
& = \word{v}{-\infty}{0} \\
& \sim D\product{-\N}.
\end{align*}

Let $l=\size{u^N\cdot u^{N-1}\cdot\ldots\cdot u^1}$, such that $$\word{\widetilde w}{-l+1}{0} = u^N\cdot u^{N-1}\cdot\ldots\cdot u^1.$$ Notice that 
\begin{align*}
\{x_{\mbox{out}}\}
& = \S \circ w^N \\
& = \S \circ e_\G\(w^N\) \\
& = \S \circ \word{\widetilde w}{-l+1}{0}.
\end{align*}

We now show that if, for some $k\in\N$, $\word{\widetilde w}{-k}{0}$ is a coupling word, then $\S \circ \word{\widetilde w}{-k}{0} = \{x_{\mbox{out}}\}$.

If $k \geq l$, then 
\begin{align*}
\{x_{\mbox{out}}\}
& = \S \circ \word{\widetilde w}{-l+1}{0} \\
& = \S \circ \word{\widetilde w}{-l+1}{-k-1} \circ \word{\widetilde w}{-k}{0} \\
& \subseteq \S \circ \word{\widetilde w}{-k}{0},
\end{align*}
and if $k < l$, then
\begin{align*}
\S \circ \word{\widetilde w}{-k}{0}
& = \S \circ \word{\widetilde w}{-k}{-l} \circ \word{\widetilde w}{-l+1}{0} \\
& \subseteq \S \circ \word{\widetilde w}{-l+1}{0} \\
& = \{x_{\mbox{out}}\}.
\end{align*}
In both cases, if $\word{\widetilde w}{-k}{0}$ is a coupling word, then $\S \circ \word{\widetilde w}{-k}{0}$ is a singleton, and therefore necessarily equal to $\{x_{\mbox{out}}\}$.

We now show \eqref{eq:TV-CFTP}. Since the bounding chain couple a.s., there exists $t_\varepsilon \in \N$ such that $$\P\{\size{\S \circ {\widetilde w}_{0} \circ {\widetilde w}_{-1} \circ \ldots \circ {\widetilde w}_{-t_\varepsilon}} = 1\} \geq 1 - \varepsilon,$$ and since ${\widetilde w}_{0} \cdot {\widetilde w}_{-1} \cdot \ldots \cdot {\widetilde w}_{-t_\varepsilon}$ has the same distribution as $\word{\widetilde w}{-t_\varepsilon}{0}$, we have that $$\P\{\size{\S \circ \word{\widetilde w}{-t_\varepsilon}{0}} = 1\} \geq 1 - \varepsilon.$$

Let $Y = y \cdot \word{\widetilde w}{-t_\varepsilon}{0}$, with $y \sim \pi$. Notice that $Y \in \S \circ \word{\widetilde w}{-t_\varepsilon}{0}$, and that the letters in $\word{\widetilde w}{-t_\varepsilon}{0}$ are i.i.d. according to $D$, such that $Y\sim\pi$.

If $\size{\S \circ \word{\widetilde w}{-t_\varepsilon}{0}} = 1$, then $\S \circ \word{\widetilde w}{-t_\varepsilon}{0} = \{x_{\mbox{out}}\}$, i.e. $Y = x_{\mbox{out}}$. We therefore have that $$\P\{x_{\mbox{out}} \neq Y\} \leq \P\{\size{\S \circ \word{\widetilde w}{-t_\varepsilon}{0}} > 1\} \leq \varepsilon,$$ and thus $$\forall x \in S, \ \size{\P\{x_{\mbox{out}}=x\}-\P\{Y=x\}} \leq \varepsilon.$$ This is precisely \eqref{eq:TV-CFTP}.
\end{proof}

The aim of skipping is to keep only the contracted words, and avoid the expanded ones. Indeed, the extended words correspond to what the initial CFTP algorithm stores. Notice that when computing $w^{n+1}$, $w^n$ is expanded and then contracted. In practice, it is possible to combine these operations so as to only insert letters that are now active, rather then all ``potentially skipped'' letters. Contracting then only removes letters that were active at the previous iteration. This ensures that the size of the word (and therefore the time spent reading it) remains that of the contracted form.

We implement this algorithm using First In First Out queues to represent words. It is given in Algorithm \ref{alg:o-cftp}. Note that there is often an \emph{ad hoc} means of computing the different sets of active letters dynamically, rather than recomputing them at each iteration.

\begin{algorithm}
\caption{CFTP with Oracle Skipping}
\label{alg:o-cftp}
\begin{algorithmic}
\Function{Oracle-CFTP}{$\A = \(\S,A,D,\cdot\)$}
 \State $n \gets 0$, $w \gets []$ \Comment{$[]$ is the empty queue}
 \Repeat
  \State $\Call{Increment}{n}$
  \State $\(B,w\) \gets \Call{Double-History}{\A, w, n}$
 \Until{$\size{B} = 1$}
 \State \Return $\Call{ElementOf}{B}$
\EndFunction
\Statex
\Function{Double-History}{$\A, w, n$}
 \State $v \gets w$, $m \gets n$
 \State $\(B^+, \Act^+, u\) \gets \Call{G-Word}{\A, 2^{-m}}$ \Comment{Compute the contracted $u^n$}
 \State $B^- \gets \S$ \Comment{$B^+$ and $B^-$ are the new and old bounding chains}
 \State $\Act^- \gets \Act\(\S\)$
 \State $\Call{Decrement}{m}$
 \While{$\Call{NotEmpty}{v}$} \Comment{Expand and contract $w^{n-1}$}
  \State $\Act \gets \Act^+ \cup \Act^-$ \Comment{\textbf{Step 1}: Expanding}
  \State $a \gets \draw{\left.D_{2^{-m}}\right|_\Act}$ \Comment{Try to expend by one letter...}
  \If{$a \in \Act^-$} \Comment{... but keep it only if it is passive}
   \State $a \gets \Call{Pop}{v}$ \Comment{Otherwise, take the next letter in $v$...}
   \State $\(B^-, \Act^-\) \gets \(B^-,\Act^-\) \circ a$ \Comment{... and update the previous chain}
  \EndIf
  \If{$a \in \Act^+$} \Comment{\textbf{Step 2}: contracting}
   \State $\Call{Push}{a,u}$ \Comment{Only keep active letters...}
   \State $\(B^+, \Act^+\) \gets \(B^+,\Act^+\) \circ a$ \Comment{... and update the new chain}
   \If{$a = \sharp$}
    \State $\Call{Decrement}{m}$ \Comment{Probability of seeing $\sharp$ has changed}
   \EndIf
  \EndIf
 \EndWhile
 \State \Return $\(B^+, u\)$
\EndFunction
\Statex
\Function{G-Word}{$\A, p$}
 \State $u \gets []$ \Comment{$[]$ is the empty queue}
 \State $B \gets \S$ \Comment{Bounding chain}
 \State $\Act \gets \Act\(\S\)$
 \Repeat
  \State $a \gets \draw{\left.D_q\right|_\Act}$
  \State $\Call{Push}{a,u}$
  \State $\(B, \Act\) \gets \(B, \Act\) \circ a$
 \Until{$a = \sharp$}
 \State \Return $\(B, \Act, u\)$
\EndFunction
\end{algorithmic}
\end{algorithm}

We now give an important result for computing some upper-bounds on the computation time of our algorithm.

\begin{prop} \label{thm:o-time}
Using the previous notation, call $\tau^f_{\mathcal O}$ the coupling time of the bounding chain of $\A$, and $\tau^b_{\mathcal O} = \size{w^N}$ coupling time of the corresponding CFTP algorithm with oracle skipping. If \begin{gather} D\product{k+1}\{a_{k+1} \in \Inact_{\word{a}{1}{k}}\} \label{eq:incr} \end{gather} is increasing in $k$, i.e. passive letters become more likely as time passes, then $$\E{\tau^b_{\mathcal O}} \leq 2 \cdot \(\E{N} + \E{\tau^f_{\mathcal O}}\).$$
\end{prop}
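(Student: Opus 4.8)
The plan is to split the contracted coupling word into its delimiters and its genuine active letters, to bound each part separately, and to pay only a factor two on the active part by combining the doubling schedule with the monotonicity hypothesis \eqref{eq:incr}.

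First I would write $\tau^b_{\mathcal O} = \size{w^N} = N + A$, where $N$ counts the $N$ delimiters $\sharp$ contained in $w^N$ (one is introduced per iteration, and delimiters are never contracted away) and $A$ is the number of active non-$\sharp$ letters. Using the decomposition $e_\G\(w^N\) = u^N \cdot \sharp \cdot \ldots \cdot u^1 \cdot \sharp$ together with the fact, established in the proof of Theorem \ref{thm:o-cftp}, that $\word{\widetilde w}{-l+1}{0} = u^N \cdots u^1$ extends to a sequence distributed according to $D\product{\Z^-}$, I would identify $A$ with the number of active letters met when reading an i.i.d.\ $D$ window of length $l := \size{\word{\widetilde w}{-l+1}{0}}$ forward from $\S$. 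Since $\S \circ w^N$ is the (singleton) output, this window couples, whereas the window produced at the previous iteration did not, by Property \ref{thm:inclusion}; because the parameter $q = 2^{-m}$ is halved at each step, the successive window lengths grow geometrically, so the usual Propp--Wilson doubling estimate gives $l \leq 2\,\tau^b_{\mathrm{raw}}$ in expectation, where $\tau^b_{\mathrm{raw}}$ is the minimal undoubled window length that couples.

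Next I would bring in the hypothesis. Writing $\Delta_k$ for the indicator that the $k$-th letter of the i.i.d.\ $D$ sequence is active, \eqref{eq:incr} says precisely that $\E{\Delta_k} = 1 - D\product{k}\{a_k \in \Inact_{\word{a}{1}{k-1}}\}$ is non-increasing in $k$: active letters are front-loaded. Hence the deterministic function $t \mapsto \sum_{k \leq t}\E{\Delta_k}$ is concave and vanishes at $0$, so it is subadditive and satisfies $A(2t) \leq 2 A(t)$. Combined with the pathwise monotonicity of the active count in the window length and with $l \leq 2\,\tau^b_{\mathrm{raw}}$, this should give $\E A \leq 2\,a^\star$, where $a^\star$ is the expected number of active letters in the minimal coupling window. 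Finally, the coupling-from-the-past duality underlying Theorem \ref{thm:cftp}, applied to the skipped chain, identifies the active content of the minimal backward coupling window with the forward oracle coupling time, i.e.\ $a^\star = \E{\tau^f_{\mathcal O}}$. Assembling the pieces yields $\E A \leq 2\,\E{\tau^f_{\mathcal O}}$, and therefore $\E{\tau^b_{\mathcal O}} = \E N + \E A \leq \E N + 2\,\E{\tau^f_{\mathcal O}} \leq 2\(\E N + \E{\tau^f_{\mathcal O}}\)$, with the slack between $\E N$ and $2\,\E N$ absorbing lower-order terms.

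The main obstacle is the step turning the doubling bound into the factor two on the active count, because $\tau^b_{\mathrm{raw}}$ and the increments $\Delta_k$ are measurable with respect to the \emph{same} sequence and are thus dependent; the inequality $A(2t) \leq 2A(t)$ is valid for deterministic $t$ but must be transported to the random, sequence-dependent value $t = \tau^b_{\mathrm{raw}}$. I expect to resolve this by recasting the comparison as ``the active letters in the overshoot interval $(\tau^b_{\mathrm{raw}}, 2\tau^b_{\mathrm{raw}}]$ are stochastically dominated, by the decreasing-increment property, by those in the initial interval $(0, \tau^b_{\mathrm{raw}}]$'', exploiting the stopping-time structure of the coupling window so that conditioning on the past leaves the monotonicity of the future active rate intact. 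A secondary, purely bookkeeping difficulty is taming the geometric randomisation of the block lengths so that the clean constant survives under expectation.
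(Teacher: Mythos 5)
Your proposal is correct in outline and takes essentially the same route as the paper, whose own proof is only a two-bullet sketch: decompose $\size{w^N}$ into the $N$ delimiters $\sharp$ plus the active letters, get the factor $2$ from the doubling schedule via the standard Propp--Wilson argument, invoke \eqref{eq:incr} to ensure the active-letter density in the overshoot beyond coupling does not exceed that during coupling, and identify the active content of the minimal window with $\E{\tau^f_{\mathcal O}}$ through the forward--backward duality of Theorem \ref{thm:cftp}. The obstacle you flag --- transporting the annealed monotonicity \eqref{eq:incr} through the conditioning induced by the random, sequence-dependent coupling window --- is precisely the step the paper's first bullet asserts without argument, so your write-up is, if anything, more explicit about the delicate point than the source.
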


The proof is the same as for the initial CFTP algorithm, with the following two exceptions:

\begin{itemize}
\item Since the algorithm computes transitions beyond the coupling of the bounding chain, it is important to make sure the proportion of active events after coupling does not exceed the proportion during coupling. This is ensured by condition \eqref{eq:incr}.
\item For the CFTP algorithm with oracle skipping, we introduce a delimiter $\sharp$ in our coupling word, which is not present in the initial bounding chain. This delimiter is present $N$ times in the final coupling word, hence the $\E{N}$ term to account for this.
\end{itemize}

Notice that, by construction, the number of times the algorithm goes back in time is the same as with normal CFTP. Since $\sharp$ is added exactly once every time the algorithm does so, we have that $\E{N}$ is equal to $\E{\log_2 \(\tau^b\)}$, where $\tau^b$ is the backward coupling time of the algorithm without skipping. The overall complexity of the algorithm is therefore in $O\(\E{\tau^f_{\mathcal O}}\)$, so long as this does no better than $O\(\E{\log_2 \(\tau^b\)}\)$.

This serves as a motivation to study the average forward coupling time of the Markov automaton with oracle skipping, which can serve as a means of estimating the average coupling time of the CFTP algorithm with either oracle or incremental skipping.
\section{Independent Sets}
\label{sec:independent}

Let $G=\(V,E\)$ be a simple undirected graph. A subset $I$ of $V$ is
called an independent set if no two vertices in $I$ are connected by
an edge, i.e. if $$\forall x, y \in I, \(x,y\)\notin E.$$ Let $\I$ be
the set of independent sets of $G$ and, for any vertex $v\in V$,
denote $N\(v\)$ the set of neighbors of $v$, that is to say the $w\in
V$ such that $\(v,w\)\in E$.


We study the performance of the CFTP algorithm with oracle skipping when sampling independent sets according to the distribution $$P_\lambda \(I\) = \frac{\lambda^{\size I}}{Z_\lambda}, \lambda\in\R,$$ focusing on the case where $\lambda$ is very large. Due to Property \ref{thm:o-time}, we restrict our analysis to the complexity for the forward coupling.

\subsection{Sampling algorithms}
\label{ssec:sampleindep}
We compare the coupling time our sampling algorithm with oracle
skipping with two other approaches described
in~\cite{H04}: Gibbs sampling and the Dyer-Greenhill chain \cite{DG00}.

\subsubsection{Gibbs sampling}
Let us first define a Gibbs sampler for $P_\lambda$. At each
iteration, independently draw a vertex $v$ uniformly at random and 
$u$ uniformly over $\left[0,1\right]$.
\begin{itemize}
\item If $u > \frac{\lambda}{\lambda + 1}$, then remove $v$ from $I$
  if $v\in I$, otherwise do nothing.
\item If $0 \leq u \leq \frac{\lambda}{\lambda +
    1}$, then add $v$ to $I$ if $N\(v\)\cap I = \emptyset$, otherwise do
  nothing.
\end{itemize}


This dynamic allows us to use Monte Carlo and CFTP methods to generate
independent sets according to $P_\lambda$. The CFTP approach can be
greatly improved by using the following bounding chain for the Glauber
dynamic defined in~\cite{H04}.

Consider a family of independent sets $A\subseteq \I$. Set $$B =
\cap_{I\in A} I, \qquad D=(\cup_{I\in A} I)\setminus B$$ and $$C = \cap_{I\in
  A} (V\setminus I) = V-B-D.$$  We have that $$A\subseteq \{I\in \I|
B\subseteq I\subseteq B\cup D\} = \langle B,D\rangle.$$ In other words, $B$
is the set of vertices common to every independent set in $A$, $C$ is
the set of vertices that are in none of the independent sets of $A$,
and $D$ is the set of vertices that are in some but not all of the
independent sets of $A$. The couples $\(\langle B_i,D_i\rangle\)_{i\in\N}$ 
define a bounding chain for the Glauber dynamic $\chain{A}{i}$. 

The Gibbs sampler for the bounding chain is defined as follows: at
each iteration, independently draw a vertex $v$ uniformly at random and $u$
uniformly over $\left[0,1\right]$. Suppose the initial
state is $\langle B,D\rangle$, and write $B+v$ for $B\cup\{v\}$ and $B-v$ for
$B\setminus \{v\}$; the arrival state $\langle B',D'\rangle$ is constructed as follows:
\begin{itemize}
\item If $u > \frac{\lambda}{\lambda + 1}$, then $B' = B- v$,
  $D' = D- v$.
\item If $0 \leq u \leq \frac{\lambda}{\lambda + 1}$, then:
  \begin{itemize}
   \item if $N(v) \subseteq C$, then $B' = B + v$ and $D'=D - v$,
   \item if $N(v)\cap B = \emptyset$ but $N(v) \cap D \neq \emptyset$, then $D' = D + v$ ($v$ was
    necessarily in $C\cup D$),
   \item otherwise do nothing ($v$ was necessarily
    in $C$).
  \end{itemize}
\end{itemize}

\subsubsection{The Dyer-Greenhill scheme}
The coupling time of the above bounding chain can be reduced through the
Dyer-Greenhill scheme. The main idea is to enable two elements in the
independent set to swap positions. Given $p_s\in\left[0,1\right]$, if, in
the Gibbs sampler, an attempt to add $v$ to the independent set $I$
fails due to the presence of a \emph{unique} neighbour $u$ already in $I$,
then with probability $p_s$, the independent set becomes $I+v-u$. A
bounding chain can easily be defined for this new scheme.

\subsubsection{Oracle skipping scheme}
Now consider oracle skipping for the bounding chain of the Gibbs
sampler. For each vertex $v$, we have two events: adding $v$ to $I$,
denoted $a_v$, and removing $v$ from $I$, denoted $r_v$. The active events
are:
\begin{itemize}
\item the $r_v$ for which $v \notin C$,
\item the $a_v$ for which $v \in C$ and $N\(v\) \cap B = \emptyset$,
\item the $a_v$ for which $v \in D$ and $N\(v\) \subseteq C$.
\end{itemize}


Let $V_r$ and $V_a$ be the set of vertices for which removal and addition
are respectively active in $\langle B,D\rangle$. For the Gibbs sampler,
events are drawn according to the conditional distribution by picking an
event uniformly at random in $V_z$, where $z = a$ with
probability $\lambda\size{V_a}/\(\lambda\size{V_a}+\size{V_r}\)$,
and $z = r$ otherwise.

For a vertex $v \in V$, the fact that $a_v$ and $r_v$ are active is only
modified when $v$, or one of its neighbours, is modified. It is therefore
possible to locally update the conditional distribution at each iteration
by simply updating the ``activeness'' of events for the modified vertex
and its neighbours. This justifies using oracle skipping rather than
incremental skipping in this context.

%

Note that those three samplers can be adapted to the case of weighted
vertices and product-form stationary processes of the form $$P_\Lambda
(I) = \frac{1}{Z_{\Lambda}} \prod_{v\in I} \lambda(v),$$ where
$\Lambda = (\lambda_v)_{v\in V}$ is a weight-vector of the
vertices. For the Gibbs sampler, $\lambda$ is replaced by the $\lambda(v)$
of the selected vertex. The other samplers are modified accordingly.

\subsection{Star graph}
\label{ssec:star}
\newcommand{\cP}{\mathcal{P}} 
\newcommand{\cO}{\mathcal{O}} 

In this paragraph, we study the graph $$G_n = \big(\inter{0}{n},\{(0,i),
i\in\inter{1}{n}\}\big),$$ called \emph{star graph}. We focus mainly on
the performance of the oracle skipping scheme for large values of $\lambda$,
such as when $\lambda \gg n$. The independents of this graph are
$$\I = \{\{0\}\} \cup \cP\(\inter{1}{n}\).$$

First, we consider the coupling time $\tau$ of the Glauber dynamic of
the bounding chains without skipping, both in the case of the Gibbs sampler
and of the Dyer-Greenhill sampler. 

Since at most one vertex is removed from $D$ at each iteration, and the
algorithm finishes when $D = \emptyset$, this coupling time is lower
bounded by the hitting time of
$$\langle B,\{0\}\rangle \cup \langle B,D\rangle,\ 0 \notin D.$$
Furthermore, since no vertex can be added to $B$ so long as $D$ contains
both $0$ and an element in $\inter{1}{n}$, $B = \emptyset$ until one
of those states is reached.

In the case of the Gibbs sampler, if $\lambda >
1$, the expected hitting time of $\langle \emptyset,\{0\}\rangle$ is
$O(\lambda^n)$. Furthermore, before reaching this state, the probability
of removing $0$ from $D$ is exactly $\frac{1}{(\lambda+1)(n+1)}$ at
each time step. For $n$ large enough, this gives
$$\E{\tau} \geq (n+1)(\lambda+1).$$

For the Dyer-Greenhill sampler, the coupling time $\tau^{DG}$ is greatly
reduced: the
expected hitting time of $\langle B,D\rangle$,
$0\notin D$ is constant: $$E = \frac{\lambda+1}{\lambda}\frac{n+1}{n}\frac 1 {p_s}.$$
This is due to the fact that the first attempt to swap a vertex other
than $0$ will immediately remove $0$ from $D$, since it is the only
neighbour of the selected vertex.

On the other hand, for the bounding chain to couple, every vertex must
be selected at least once for addition or removal. As at
each step, the modified vertex is chosen uniformly and independently at
random, this gives that $$\E{\tau^{DG}} \geq n\ln n + O(1).$$

Now, let us consider the (forward) coupling time $\tau^\cO$ of the
coupling chain with oracle skipping. The coupling time is at most the hitting
time of $\langle B, \emptyset \rangle$. We have two main steps to consider: 
\begin{enumerate}
\item The hitting time of a state $\langle B, D\rangle$ where $0 \notin D$;
\item The hitting time of $\langle B, \emptyset \rangle$. 
\end{enumerate}


Let us first focus on the hitting time of $\langle B, D\rangle$ where $0\notin
D$. Consider the following birth-and-death
process on $\inter{0}{n}$, where state $i$ represents the
cardinal of $C$, until $0$ is added to $C$. In state $i$, the active
events are the addition of vertices in $C$ and the removal of vertices
in $D$. As a consequence, the probabilities $p_{i,i+1}$ and
$p_{i+1,i}$ to go respectively from state $i$ to state $i+1$ and from
$i+1$ to $i$ are 
\begin{equation}
\label{eq:pi}
p_{i,i+1} = \frac{n-i}{n-i + i\lambda} \text{ and }
p_{i+1,i} = \frac{(i+1)\lambda}{n-i-1 + (i+1)\lambda}.\
\end{equation}

As we assumed $\lambda \geq n$, 
computations show that the stationary distribution $\pi$ of this birth-and-death process satisfies, for all $i\in\inter{0}{n}$, 
$$\pi(i) = \left(\binom{n-1}{i-1} \lambda^{-(i-1)} + \binom{n-1}{i} \lambda^{-i}\right)\pi(0).$$
so $\pi(0) \geq \frac 1 2 \left( 1+\frac 1
  \lambda \right)^{-n}$.

The bounding chain can be bounded by the following process: when in
state 0 only, vertex 0 can be removed with probability $1/(n+1)$ (all
events are active for removal, none for addition).  Then the expected
time $\tau_1$ for reaching a state $\langle B, D\rangle$ where
$0\notin D$ is (when $\lambda\geq n$)
$$\E{\tau_1} = \frac{(n+1)}{\pi(0)} \leq 2e(n+1).$$ 

For the second step, consider the birth-and-death process on
$\inter{0}{n}$ where state $i$ represents the 
$\langle B, D\rangle$ such that $|B|=n-i$ and $0\notin D$. For $i>
0$, $i$ vertices are active for addition and at least $n-i$ are active for
removal.
The transitions probabilities are exactly the probabilities
$p_{i,j}$ defined in Eq.~\eqref{eq:pi}.

Simple computations show that the hitting time $\tau_2$ of state $0$
from state $n$ satisfies $$\E{\tau_2} =
\left(\frac{1+\lambda}{\lambda}\right)^n + n \leq n +
e^{n/\lambda} = n+O(1).$$

Finally, note that in state $n$, vertex 0 is active for addition, and
in case this event is generated (which happens with probability $\frac{1}{n+1}$), we
have to take into account the return time from the first step ($0\in D$) to the second step ($0\notin D$). By the
Markov inequality, the probability that state $n$ is visited again
before state $0$ is at most $\frac{\pi(n)}{\pi(0)} = \lambda^{-n}$. 

As a consequence, the expected coupling time satisfies
$$\E{\tau^{\O}} \leq \E{\tau_1} + \E{\tau_2} + O(1)\leq (2e+1)n + O(1).$$
Notice that the coupling time does not depend on $\lambda$ and is linear
in $n$. It therefore does better then the other samplers presented above.

\subsection{Numerical experiments}
\label{sec:numeric}
We now do an experimental comparison of the three samplers described
in Section~\ref{ssec:sampleindep} for two models: the star graph, that
has been precisely analysed in Paragraph~\ref{ssec:star}, and the
Barab\'asi-Albert model \cite{AB02}.

\paragraph{Star graph}
We performed experiments for a star graph with 100 vertices and for
different values of $\lambda$. For each value of $\lambda$ and each
sampler, 1000 experiments have been performed, and the average number
of transitions computed is depicted in Figures~\ref{fig:etoile}.

 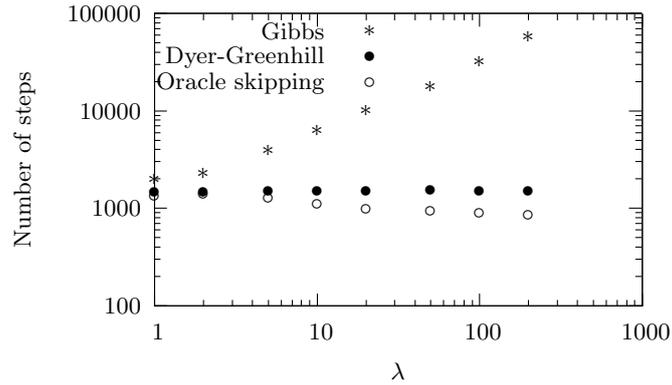
\begin{figure}[htbp]
  \centering
  \small
\setlength{\unitlength}{0.240900pt}
\ifx\plotpoint\undefined\newsavebox{\plotpoint}\fi
\sbox{\plotpoint}{\rule[-0.200pt]{0.400pt}{0.400pt}}%
\begin{picture}(1050,630)(0,0)
\sbox{\plotpoint}{\rule[-0.200pt]{0.400pt}{0.400pt}}%
\put(231.0,131.0){\rule[-0.200pt]{4.818pt}{0.400pt}}
\put(211,131){\makebox(0,0)[r]{ 100}}
\put(969.0,131.0){\rule[-0.200pt]{4.818pt}{0.400pt}}
\put(231.0,177.0){\rule[-0.200pt]{2.409pt}{0.400pt}}
\put(979.0,177.0){\rule[-0.200pt]{2.409pt}{0.400pt}}
\put(231.0,204.0){\rule[-0.200pt]{2.409pt}{0.400pt}}
\put(979.0,204.0){\rule[-0.200pt]{2.409pt}{0.400pt}}
\put(231.0,223.0){\rule[-0.200pt]{2.409pt}{0.400pt}}
\put(979.0,223.0){\rule[-0.200pt]{2.409pt}{0.400pt}}
\put(231.0,238.0){\rule[-0.200pt]{2.409pt}{0.400pt}}
\put(979.0,238.0){\rule[-0.200pt]{2.409pt}{0.400pt}}
\put(231.0,250.0){\rule[-0.200pt]{2.409pt}{0.400pt}}
\put(979.0,250.0){\rule[-0.200pt]{2.409pt}{0.400pt}}
\put(231.0,260.0){\rule[-0.200pt]{2.409pt}{0.400pt}}
\put(979.0,260.0){\rule[-0.200pt]{2.409pt}{0.400pt}}
\put(231.0,269.0){\rule[-0.200pt]{2.409pt}{0.400pt}}
\put(979.0,269.0){\rule[-0.200pt]{2.409pt}{0.400pt}}
\put(231.0,277.0){\rule[-0.200pt]{2.409pt}{0.400pt}}
\put(979.0,277.0){\rule[-0.200pt]{2.409pt}{0.400pt}}
\put(231.0,284.0){\rule[-0.200pt]{4.818pt}{0.400pt}}
\put(211,284){\makebox(0,0)[r]{ 1000}}
\put(969.0,284.0){\rule[-0.200pt]{4.818pt}{0.400pt}}
\put(231.0,330.0){\rule[-0.200pt]{2.409pt}{0.400pt}}
\put(979.0,330.0){\rule[-0.200pt]{2.409pt}{0.400pt}}
\put(231.0,357.0){\rule[-0.200pt]{2.409pt}{0.400pt}}
\put(979.0,357.0){\rule[-0.200pt]{2.409pt}{0.400pt}}
\put(231.0,376.0){\rule[-0.200pt]{2.409pt}{0.400pt}}
\put(979.0,376.0){\rule[-0.200pt]{2.409pt}{0.400pt}}
\put(231.0,390.0){\rule[-0.200pt]{2.409pt}{0.400pt}}
\put(979.0,390.0){\rule[-0.200pt]{2.409pt}{0.400pt}}
\put(231.0,402.0){\rule[-0.200pt]{2.409pt}{0.400pt}}
\put(979.0,402.0){\rule[-0.200pt]{2.409pt}{0.400pt}}
\put(231.0,413.0){\rule[-0.200pt]{2.409pt}{0.400pt}}
\put(979.0,413.0){\rule[-0.200pt]{2.409pt}{0.400pt}}
\put(231.0,422.0){\rule[-0.200pt]{2.409pt}{0.400pt}}
\put(979.0,422.0){\rule[-0.200pt]{2.409pt}{0.400pt}}
\put(231.0,429.0){\rule[-0.200pt]{2.409pt}{0.400pt}}
\put(979.0,429.0){\rule[-0.200pt]{2.409pt}{0.400pt}}
\put(231.0,436.0){\rule[-0.200pt]{4.818pt}{0.400pt}}
\put(211,436){\makebox(0,0)[r]{ 10000}}
\put(969.0,436.0){\rule[-0.200pt]{4.818pt}{0.400pt}}
\put(231.0,482.0){\rule[-0.200pt]{2.409pt}{0.400pt}}
\put(979.0,482.0){\rule[-0.200pt]{2.409pt}{0.400pt}}
\put(231.0,509.0){\rule[-0.200pt]{2.409pt}{0.400pt}}
\put(979.0,509.0){\rule[-0.200pt]{2.409pt}{0.400pt}}
\put(231.0,528.0){\rule[-0.200pt]{2.409pt}{0.400pt}}
\put(979.0,528.0){\rule[-0.200pt]{2.409pt}{0.400pt}}
\put(231.0,543.0){\rule[-0.200pt]{2.409pt}{0.400pt}}
\put(979.0,543.0){\rule[-0.200pt]{2.409pt}{0.400pt}}
\put(231.0,555.0){\rule[-0.200pt]{2.409pt}{0.400pt}}
\put(979.0,555.0){\rule[-0.200pt]{2.409pt}{0.400pt}}
\put(231.0,565.0){\rule[-0.200pt]{2.409pt}{0.400pt}}
\put(979.0,565.0){\rule[-0.200pt]{2.409pt}{0.400pt}}
\put(231.0,574.0){\rule[-0.200pt]{2.409pt}{0.400pt}}
\put(979.0,574.0){\rule[-0.200pt]{2.409pt}{0.400pt}}
\put(231.0,582.0){\rule[-0.200pt]{2.409pt}{0.400pt}}
\put(979.0,582.0){\rule[-0.200pt]{2.409pt}{0.400pt}}
\put(231.0,589.0){\rule[-0.200pt]{4.818pt}{0.400pt}}
\put(211,589){\makebox(0,0)[r]{ 100000}}
\put(969.0,589.0){\rule[-0.200pt]{4.818pt}{0.400pt}}
\put(231.0,131.0){\rule[-0.200pt]{0.400pt}{4.818pt}}
\put(231,90){\makebox(0,0){ 1}}
\put(231.0,569.0){\rule[-0.200pt]{0.400pt}{4.818pt}}
\put(307.0,131.0){\rule[-0.200pt]{0.400pt}{2.409pt}}
\put(307.0,579.0){\rule[-0.200pt]{0.400pt}{2.409pt}}
\put(352.0,131.0){\rule[-0.200pt]{0.400pt}{2.409pt}}
\put(352.0,579.0){\rule[-0.200pt]{0.400pt}{2.409pt}}
\put(383.0,131.0){\rule[-0.200pt]{0.400pt}{2.409pt}}
\put(383.0,579.0){\rule[-0.200pt]{0.400pt}{2.409pt}}
\put(408.0,131.0){\rule[-0.200pt]{0.400pt}{2.409pt}}
\put(408.0,579.0){\rule[-0.200pt]{0.400pt}{2.409pt}}
\put(428.0,131.0){\rule[-0.200pt]{0.400pt}{2.409pt}}
\put(428.0,579.0){\rule[-0.200pt]{0.400pt}{2.409pt}}
\put(445.0,131.0){\rule[-0.200pt]{0.400pt}{2.409pt}}
\put(445.0,579.0){\rule[-0.200pt]{0.400pt}{2.409pt}}
\put(459.0,131.0){\rule[-0.200pt]{0.400pt}{2.409pt}}
\put(459.0,579.0){\rule[-0.200pt]{0.400pt}{2.409pt}}
\put(472.0,131.0){\rule[-0.200pt]{0.400pt}{2.409pt}}
\put(472.0,579.0){\rule[-0.200pt]{0.400pt}{2.409pt}}
\put(484.0,131.0){\rule[-0.200pt]{0.400pt}{4.818pt}}
\put(484,90){\makebox(0,0){ 10}}
\put(484.0,569.0){\rule[-0.200pt]{0.400pt}{4.818pt}}
\put(560.0,131.0){\rule[-0.200pt]{0.400pt}{2.409pt}}
\put(560.0,579.0){\rule[-0.200pt]{0.400pt}{2.409pt}}
\put(604.0,131.0){\rule[-0.200pt]{0.400pt}{2.409pt}}
\put(604.0,579.0){\rule[-0.200pt]{0.400pt}{2.409pt}}
\put(636.0,131.0){\rule[-0.200pt]{0.400pt}{2.409pt}}
\put(636.0,579.0){\rule[-0.200pt]{0.400pt}{2.409pt}}
\put(660.0,131.0){\rule[-0.200pt]{0.400pt}{2.409pt}}
\put(660.0,579.0){\rule[-0.200pt]{0.400pt}{2.409pt}}
\put(680.0,131.0){\rule[-0.200pt]{0.400pt}{2.409pt}}
\put(680.0,579.0){\rule[-0.200pt]{0.400pt}{2.409pt}}
\put(697.0,131.0){\rule[-0.200pt]{0.400pt}{2.409pt}}
\put(697.0,579.0){\rule[-0.200pt]{0.400pt}{2.409pt}}
\put(712.0,131.0){\rule[-0.200pt]{0.400pt}{2.409pt}}
\put(712.0,579.0){\rule[-0.200pt]{0.400pt}{2.409pt}}
\put(725.0,131.0){\rule[-0.200pt]{0.400pt}{2.409pt}}
\put(725.0,579.0){\rule[-0.200pt]{0.400pt}{2.409pt}}
\put(736.0,131.0){\rule[-0.200pt]{0.400pt}{4.818pt}}
\put(736,90){\makebox(0,0){ 100}}
\put(736.0,569.0){\rule[-0.200pt]{0.400pt}{4.818pt}}
\put(812.0,131.0){\rule[-0.200pt]{0.400pt}{2.409pt}}
\put(812.0,579.0){\rule[-0.200pt]{0.400pt}{2.409pt}}
\put(857.0,131.0){\rule[-0.200pt]{0.400pt}{2.409pt}}
\put(857.0,579.0){\rule[-0.200pt]{0.400pt}{2.409pt}}
\put(888.0,131.0){\rule[-0.200pt]{0.400pt}{2.409pt}}
\put(888.0,579.0){\rule[-0.200pt]{0.400pt}{2.409pt}}
\put(913.0,131.0){\rule[-0.200pt]{0.400pt}{2.409pt}}
\put(913.0,579.0){\rule[-0.200pt]{0.400pt}{2.409pt}}
\put(933.0,131.0){\rule[-0.200pt]{0.400pt}{2.409pt}}
\put(933.0,579.0){\rule[-0.200pt]{0.400pt}{2.409pt}}
\put(950.0,131.0){\rule[-0.200pt]{0.400pt}{2.409pt}}
\put(950.0,579.0){\rule[-0.200pt]{0.400pt}{2.409pt}}
\put(965.0,131.0){\rule[-0.200pt]{0.400pt}{2.409pt}}
\put(965.0,579.0){\rule[-0.200pt]{0.400pt}{2.409pt}}
\put(977.0,131.0){\rule[-0.200pt]{0.400pt}{2.409pt}}
\put(977.0,579.0){\rule[-0.200pt]{0.400pt}{2.409pt}}
\put(989.0,131.0){\rule[-0.200pt]{0.400pt}{4.818pt}}
\put(989,90){\makebox(0,0){ 1000}}
\put(989.0,569.0){\rule[-0.200pt]{0.400pt}{4.818pt}}
\put(231.0,131.0){\rule[-0.200pt]{0.400pt}{110.332pt}}
\put(231.0,131.0){\rule[-0.200pt]{182.602pt}{0.400pt}}
\put(989.0,131.0){\rule[-0.200pt]{0.400pt}{110.332pt}}
\put(231.0,589.0){\rule[-0.200pt]{182.602pt}{0.400pt}}
\put(30,360){\makebox(0,0){\rotatebox{90}{Number of steps}}}
\put(610,29){\makebox(0,0){$\lambda$}}
\put(496,562){\makebox(0,0)[r]{Gibbs}}
\put(231,329){\makebox(0,0){$\ast$}}
\put(307,338){\makebox(0,0){$\ast$}}
\put(408,375){\makebox(0,0){$\ast$}}
\put(484,405){\makebox(0,0){$\ast$}}
\put(560,437){\makebox(0,0){$\ast$}}
\put(660,474){\makebox(0,0){$\ast$}}
\put(736,513){\makebox(0,0){$\ast$}}
\put(812,553){\makebox(0,0){$\ast$}}
\put(566,562){\makebox(0,0){$\ast$}}
\put(496,521){\makebox(0,0)[r]{Dyer-Greenhill}}
\put(231,309){\makebox(0,0){$\bullet$}}
\put(307,309){\makebox(0,0){$\bullet$}}
\put(408,310){\makebox(0,0){$\bullet$}}
\put(484,310){\makebox(0,0){$\bullet$}}
\put(560,310){\makebox(0,0){$\bullet$}}
\put(660,311){\makebox(0,0){$\bullet$}}
\put(736,310){\makebox(0,0){$\bullet$}}
\put(812,310){\makebox(0,0){$\bullet$}}
\put(566,521){\makebox(0,0){$\bullet$}}
\sbox{\plotpoint}{\rule[-0.400pt]{0.800pt}{0.800pt}}%
\sbox{\plotpoint}{\rule[-0.200pt]{0.400pt}{0.400pt}}%
\put(496,480){\makebox(0,0)[r]{Oracle skipping}}
\sbox{\plotpoint}{\rule[-0.400pt]{0.800pt}{0.800pt}}%
\put(231,302){\makebox(0,0){$\circ$}}
\put(307,305){\makebox(0,0){$\circ$}}
\put(408,299){\makebox(0,0){$\circ$}}
\put(484,290){\makebox(0,0){$\circ$}}
\put(560,282){\makebox(0,0){$\circ$}}
\put(660,279){\makebox(0,0){$\circ$}}
\put(736,275){\makebox(0,0){$\circ$}}
\put(812,273){\makebox(0,0){$\circ$}}
\put(566,480){\makebox(0,0){$\circ$}}
\sbox{\plotpoint}{\rule[-0.200pt]{0.400pt}{0.400pt}}%
\put(231.0,131.0){\rule[-0.200pt]{0.400pt}{110.332pt}}
\put(231.0,131.0){\rule[-0.200pt]{182.602pt}{0.400pt}}
\put(989.0,131.0){\rule[-0.200pt]{0.400pt}{110.332pt}}
\put(231.0,589.0){\rule[-0.200pt]{182.602pt}{0.400pt}}
\end{picture}
  \caption{Number of events generated by CFTP algortihms for the star graph with $100$ vertices for different values of $\lambda$.}
  \label{fig:etoile}
\end{figure}

The first remark is that both Dyer-Greenhill and oracle skipping
samplers outperform the Gibbs sampler. Second, the Dyer-Greenhill
sampler seams insensitive to the value of $\lambda$, which conforms to
the bound $n\ln n + O(1)$ given in Section~\ref{ssec:star}. Finally,
the oracle skipping scheme is always the most efficient algorithm. It
is noticeable that the number of event generated decreases with
$\lambda$. This can be explained the following way: large independent
sets are favored when $\lambda$ grows. Then, after reaching the
independent set $\inter{1}{n}$, whose probability grows with
$\lambda$, the probability that the next event is active is less than 
$1/(1+\lambda)$. As a consequence, many events are skipped. 

The difference in behavior between the Dyer-Greenhill and oracle skipping samplers is
more obvious with the star graph with 1000 vertices, as depicted in
Figure~\ref{fig:etoile1000} (100 experiments are run for each value of $\lambda$).

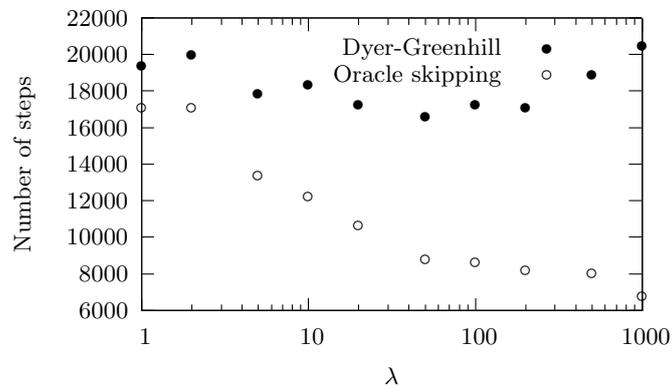
\begin{figure}[htbp]
  \centering
  \small
\setlength{\unitlength}{0.240900pt}
\ifx\plotpoint\undefined\newsavebox{\plotpoint}\fi
\sbox{\plotpoint}{\rule[-0.200pt]{0.400pt}{0.400pt}}%
\begin{picture}(1050,630)(0,0)
\sbox{\plotpoint}{\rule[-0.200pt]{0.400pt}{0.400pt}}%
\put(211.0,131.0){\rule[-0.200pt]{4.818pt}{0.400pt}}
\put(191,131){\makebox(0,0)[r]{ 6000}}
\put(969.0,131.0){\rule[-0.200pt]{4.818pt}{0.400pt}}
\put(211.0,188.0){\rule[-0.200pt]{4.818pt}{0.400pt}}
\put(191,188){\makebox(0,0)[r]{ 8000}}
\put(969.0,188.0){\rule[-0.200pt]{4.818pt}{0.400pt}}
\put(211.0,246.0){\rule[-0.200pt]{4.818pt}{0.400pt}}
\put(191,246){\makebox(0,0)[r]{ 10000}}
\put(969.0,246.0){\rule[-0.200pt]{4.818pt}{0.400pt}}
\put(211.0,303.0){\rule[-0.200pt]{4.818pt}{0.400pt}}
\put(191,303){\makebox(0,0)[r]{ 12000}}
\put(969.0,303.0){\rule[-0.200pt]{4.818pt}{0.400pt}}
\put(211.0,360.0){\rule[-0.200pt]{4.818pt}{0.400pt}}
\put(191,360){\makebox(0,0)[r]{ 14000}}
\put(969.0,360.0){\rule[-0.200pt]{4.818pt}{0.400pt}}
\put(211.0,417.0){\rule[-0.200pt]{4.818pt}{0.400pt}}
\put(191,417){\makebox(0,0)[r]{ 16000}}
\put(969.0,417.0){\rule[-0.200pt]{4.818pt}{0.400pt}}
\put(211.0,475.0){\rule[-0.200pt]{4.818pt}{0.400pt}}
\put(191,475){\makebox(0,0)[r]{ 18000}}
\put(969.0,475.0){\rule[-0.200pt]{4.818pt}{0.400pt}}
\put(211.0,532.0){\rule[-0.200pt]{4.818pt}{0.400pt}}
\put(191,532){\makebox(0,0)[r]{ 20000}}
\put(969.0,532.0){\rule[-0.200pt]{4.818pt}{0.400pt}}
\put(211.0,589.0){\rule[-0.200pt]{4.818pt}{0.400pt}}
\put(191,589){\makebox(0,0)[r]{ 22000}}
\put(969.0,589.0){\rule[-0.200pt]{4.818pt}{0.400pt}}
\put(211.0,131.0){\rule[-0.200pt]{0.400pt}{4.818pt}}
\put(211,90){\makebox(0,0){ 1}}
\put(211.0,569.0){\rule[-0.200pt]{0.400pt}{4.818pt}}
\put(289.0,131.0){\rule[-0.200pt]{0.400pt}{2.409pt}}
\put(289.0,579.0){\rule[-0.200pt]{0.400pt}{2.409pt}}
\put(335.0,131.0){\rule[-0.200pt]{0.400pt}{2.409pt}}
\put(335.0,579.0){\rule[-0.200pt]{0.400pt}{2.409pt}}
\put(367.0,131.0){\rule[-0.200pt]{0.400pt}{2.409pt}}
\put(367.0,579.0){\rule[-0.200pt]{0.400pt}{2.409pt}}
\put(392.0,131.0){\rule[-0.200pt]{0.400pt}{2.409pt}}
\put(392.0,579.0){\rule[-0.200pt]{0.400pt}{2.409pt}}
\put(413.0,131.0){\rule[-0.200pt]{0.400pt}{2.409pt}}
\put(413.0,579.0){\rule[-0.200pt]{0.400pt}{2.409pt}}
\put(430.0,131.0){\rule[-0.200pt]{0.400pt}{2.409pt}}
\put(430.0,579.0){\rule[-0.200pt]{0.400pt}{2.409pt}}
\put(445.0,131.0){\rule[-0.200pt]{0.400pt}{2.409pt}}
\put(445.0,579.0){\rule[-0.200pt]{0.400pt}{2.409pt}}
\put(458.0,131.0){\rule[-0.200pt]{0.400pt}{2.409pt}}
\put(458.0,579.0){\rule[-0.200pt]{0.400pt}{2.409pt}}
\put(470.0,131.0){\rule[-0.200pt]{0.400pt}{4.818pt}}
\put(470,90){\makebox(0,0){ 10}}
\put(470.0,569.0){\rule[-0.200pt]{0.400pt}{4.818pt}}
\put(548.0,131.0){\rule[-0.200pt]{0.400pt}{2.409pt}}
\put(548.0,579.0){\rule[-0.200pt]{0.400pt}{2.409pt}}
\put(594.0,131.0){\rule[-0.200pt]{0.400pt}{2.409pt}}
\put(594.0,579.0){\rule[-0.200pt]{0.400pt}{2.409pt}}
\put(626.0,131.0){\rule[-0.200pt]{0.400pt}{2.409pt}}
\put(626.0,579.0){\rule[-0.200pt]{0.400pt}{2.409pt}}
\put(652.0,131.0){\rule[-0.200pt]{0.400pt}{2.409pt}}
\put(652.0,579.0){\rule[-0.200pt]{0.400pt}{2.409pt}}
\put(672.0,131.0){\rule[-0.200pt]{0.400pt}{2.409pt}}
\put(672.0,579.0){\rule[-0.200pt]{0.400pt}{2.409pt}}
\put(689.0,131.0){\rule[-0.200pt]{0.400pt}{2.409pt}}
\put(689.0,579.0){\rule[-0.200pt]{0.400pt}{2.409pt}}
\put(705.0,131.0){\rule[-0.200pt]{0.400pt}{2.409pt}}
\put(705.0,579.0){\rule[-0.200pt]{0.400pt}{2.409pt}}
\put(718.0,131.0){\rule[-0.200pt]{0.400pt}{2.409pt}}
\put(718.0,579.0){\rule[-0.200pt]{0.400pt}{2.409pt}}
\put(730.0,131.0){\rule[-0.200pt]{0.400pt}{4.818pt}}
\put(730,90){\makebox(0,0){ 100}}
\put(730.0,569.0){\rule[-0.200pt]{0.400pt}{4.818pt}}
\put(808.0,131.0){\rule[-0.200pt]{0.400pt}{2.409pt}}
\put(808.0,579.0){\rule[-0.200pt]{0.400pt}{2.409pt}}
\put(853.0,131.0){\rule[-0.200pt]{0.400pt}{2.409pt}}
\put(853.0,579.0){\rule[-0.200pt]{0.400pt}{2.409pt}}
\put(886.0,131.0){\rule[-0.200pt]{0.400pt}{2.409pt}}
\put(886.0,579.0){\rule[-0.200pt]{0.400pt}{2.409pt}}
\put(911.0,131.0){\rule[-0.200pt]{0.400pt}{2.409pt}}
\put(911.0,579.0){\rule[-0.200pt]{0.400pt}{2.409pt}}
\put(931.0,131.0){\rule[-0.200pt]{0.400pt}{2.409pt}}
\put(931.0,579.0){\rule[-0.200pt]{0.400pt}{2.409pt}}
\put(949.0,131.0){\rule[-0.200pt]{0.400pt}{2.409pt}}
\put(949.0,579.0){\rule[-0.200pt]{0.400pt}{2.409pt}}
\put(964.0,131.0){\rule[-0.200pt]{0.400pt}{2.409pt}}
\put(964.0,579.0){\rule[-0.200pt]{0.400pt}{2.409pt}}
\put(977.0,131.0){\rule[-0.200pt]{0.400pt}{2.409pt}}
\put(977.0,579.0){\rule[-0.200pt]{0.400pt}{2.409pt}}
\put(989.0,131.0){\rule[-0.200pt]{0.400pt}{4.818pt}}
\put(989,90){\makebox(0,0){ 1000}}
\put(989.0,569.0){\rule[-0.200pt]{0.400pt}{4.818pt}}
\put(211.0,131.0){\rule[-0.200pt]{0.400pt}{110.332pt}}
\put(211.0,131.0){\rule[-0.200pt]{187.420pt}{0.400pt}}
\put(989.0,131.0){\rule[-0.200pt]{0.400pt}{110.332pt}}
\put(211.0,589.0){\rule[-0.200pt]{187.420pt}{0.400pt}}
\put(30,360){\makebox(0,0){\rotatebox{90}{Number of steps}}}
\put(600,29){\makebox(0,0){$\lambda$}}
\put(771,540){\makebox(0,0)[r]{Dyer-Greenhill}}
\put(211,513){\makebox(0,0){$\bullet$}}
\put(289,531){\makebox(0,0){$\bullet$}}
\put(392,470){\makebox(0,0){$\bullet$}}
\put(470,484){\makebox(0,0){$\bullet$}}
\put(548,452){\makebox(0,0){$\bullet$}}
\put(652,433){\makebox(0,0){$\bullet$}}
\put(730,452){\makebox(0,0){$\bullet$}}
\put(808,447){\makebox(0,0){$\bullet$}}
\put(911,499){\makebox(0,0){$\bullet$}}
\put(989,545){\makebox(0,0){$\bullet$}}
\put(841,540){\makebox(0,0){$\bullet$}}
\put(771,499){\makebox(0,0)[r]{Oracle skipping}}
\put(211,448){\makebox(0,0){$\circ$}}
\put(289,447){\makebox(0,0){$\circ$}}
\put(392,342){\makebox(0,0){$\circ$}}
\put(470,308){\makebox(0,0){$\circ$}}
\put(548,263){\makebox(0,0){$\circ$}}
\put(652,210){\makebox(0,0){$\circ$}}
\put(730,206){\makebox(0,0){$\circ$}}
\put(808,193){\makebox(0,0){$\circ$}}
\put(911,188){\makebox(0,0){$\circ$}}
\put(989,153){\makebox(0,0){$\circ$}}
\put(841,499){\makebox(0,0){$\circ$}}
\put(211.0,131.0){\rule[-0.200pt]{0.400pt}{110.332pt}}
\put(211.0,131.0){\rule[-0.200pt]{187.420pt}{0.400pt}}
\put(989.0,131.0){\rule[-0.200pt]{0.400pt}{110.332pt}}
\put(211.0,589.0){\rule[-0.200pt]{187.420pt}{0.400pt}}
\end{picture}
  \caption{Number of events generated by CFTP algortihms for the star graph with $1000$ vertices for different values of $\lambda$.}
  \label{fig:etoile1000}
\end{figure}

\paragraph{Barab\'asi-Albert model}
We now generate a random graph with preferential attachment.  Start
from a clique with 5 vertices and at each step add one new vertex $v$
and two edges $(v,w_1)$ and $(v,w_2)$, where $w_1$ and $w_2$ are
chosen at random with probability proportional to their
degree. Figure~\ref{fig:barabasi} compares the average number of
events generated for 100 experiments with the three samplers, for
graphs with 100 vertices.

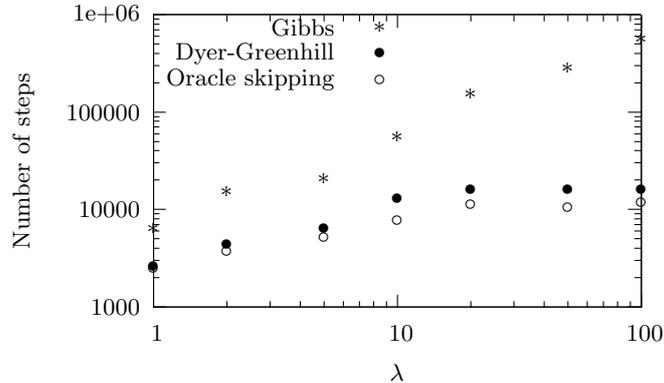
\begin{figure}[htbp]
  \centering
  \small
\setlength{\unitlength}{0.240900pt}
\ifx\plotpoint\undefined\newsavebox{\plotpoint}\fi
\sbox{\plotpoint}{\rule[-0.200pt]{0.400pt}{0.400pt}}%
\begin{picture}(1050,630)(0,0)
\sbox{\plotpoint}{\rule[-0.200pt]{0.400pt}{0.400pt}}%
\put(231.0,131.0){\rule[-0.200pt]{4.818pt}{0.400pt}}
\put(211,131){\makebox(0,0)[r]{ 1000}}
\put(969.0,131.0){\rule[-0.200pt]{4.818pt}{0.400pt}}
\put(231.0,177.0){\rule[-0.200pt]{2.409pt}{0.400pt}}
\put(979.0,177.0){\rule[-0.200pt]{2.409pt}{0.400pt}}
\put(231.0,204.0){\rule[-0.200pt]{2.409pt}{0.400pt}}
\put(979.0,204.0){\rule[-0.200pt]{2.409pt}{0.400pt}}
\put(231.0,223.0){\rule[-0.200pt]{2.409pt}{0.400pt}}
\put(979.0,223.0){\rule[-0.200pt]{2.409pt}{0.400pt}}
\put(231.0,238.0){\rule[-0.200pt]{2.409pt}{0.400pt}}
\put(979.0,238.0){\rule[-0.200pt]{2.409pt}{0.400pt}}
\put(231.0,250.0){\rule[-0.200pt]{2.409pt}{0.400pt}}
\put(979.0,250.0){\rule[-0.200pt]{2.409pt}{0.400pt}}
\put(231.0,260.0){\rule[-0.200pt]{2.409pt}{0.400pt}}
\put(979.0,260.0){\rule[-0.200pt]{2.409pt}{0.400pt}}
\put(231.0,269.0){\rule[-0.200pt]{2.409pt}{0.400pt}}
\put(979.0,269.0){\rule[-0.200pt]{2.409pt}{0.400pt}}
\put(231.0,277.0){\rule[-0.200pt]{2.409pt}{0.400pt}}
\put(979.0,277.0){\rule[-0.200pt]{2.409pt}{0.400pt}}
\put(231.0,284.0){\rule[-0.200pt]{4.818pt}{0.400pt}}
\put(211,284){\makebox(0,0)[r]{ 10000}}
\put(969.0,284.0){\rule[-0.200pt]{4.818pt}{0.400pt}}
\put(231.0,330.0){\rule[-0.200pt]{2.409pt}{0.400pt}}
\put(979.0,330.0){\rule[-0.200pt]{2.409pt}{0.400pt}}
\put(231.0,357.0){\rule[-0.200pt]{2.409pt}{0.400pt}}
\put(979.0,357.0){\rule[-0.200pt]{2.409pt}{0.400pt}}
\put(231.0,376.0){\rule[-0.200pt]{2.409pt}{0.400pt}}
\put(979.0,376.0){\rule[-0.200pt]{2.409pt}{0.400pt}}
\put(231.0,390.0){\rule[-0.200pt]{2.409pt}{0.400pt}}
\put(979.0,390.0){\rule[-0.200pt]{2.409pt}{0.400pt}}
\put(231.0,402.0){\rule[-0.200pt]{2.409pt}{0.400pt}}
\put(979.0,402.0){\rule[-0.200pt]{2.409pt}{0.400pt}}
\put(231.0,413.0){\rule[-0.200pt]{2.409pt}{0.400pt}}
\put(979.0,413.0){\rule[-0.200pt]{2.409pt}{0.400pt}}
\put(231.0,422.0){\rule[-0.200pt]{2.409pt}{0.400pt}}
\put(979.0,422.0){\rule[-0.200pt]{2.409pt}{0.400pt}}
\put(231.0,429.0){\rule[-0.200pt]{2.409pt}{0.400pt}}
\put(979.0,429.0){\rule[-0.200pt]{2.409pt}{0.400pt}}
\put(231.0,436.0){\rule[-0.200pt]{4.818pt}{0.400pt}}
\put(211,436){\makebox(0,0)[r]{ 100000}}
\put(969.0,436.0){\rule[-0.200pt]{4.818pt}{0.400pt}}
\put(231.0,482.0){\rule[-0.200pt]{2.409pt}{0.400pt}}
\put(979.0,482.0){\rule[-0.200pt]{2.409pt}{0.400pt}}
\put(231.0,509.0){\rule[-0.200pt]{2.409pt}{0.400pt}}
\put(979.0,509.0){\rule[-0.200pt]{2.409pt}{0.400pt}}
\put(231.0,528.0){\rule[-0.200pt]{2.409pt}{0.400pt}}
\put(979.0,528.0){\rule[-0.200pt]{2.409pt}{0.400pt}}
\put(231.0,543.0){\rule[-0.200pt]{2.409pt}{0.400pt}}
\put(979.0,543.0){\rule[-0.200pt]{2.409pt}{0.400pt}}
\put(231.0,555.0){\rule[-0.200pt]{2.409pt}{0.400pt}}
\put(979.0,555.0){\rule[-0.200pt]{2.409pt}{0.400pt}}
\put(231.0,565.0){\rule[-0.200pt]{2.409pt}{0.400pt}}
\put(979.0,565.0){\rule[-0.200pt]{2.409pt}{0.400pt}}
\put(231.0,574.0){\rule[-0.200pt]{2.409pt}{0.400pt}}
\put(979.0,574.0){\rule[-0.200pt]{2.409pt}{0.400pt}}
\put(231.0,582.0){\rule[-0.200pt]{2.409pt}{0.400pt}}
\put(979.0,582.0){\rule[-0.200pt]{2.409pt}{0.400pt}}
\put(231.0,589.0){\rule[-0.200pt]{4.818pt}{0.400pt}}
\put(211,589){\makebox(0,0)[r]{ 1e+06}}
\put(969.0,589.0){\rule[-0.200pt]{4.818pt}{0.400pt}}
\put(231.0,131.0){\rule[-0.200pt]{0.400pt}{4.818pt}}
\put(231,90){\makebox(0,0){ 1}}
\put(231.0,569.0){\rule[-0.200pt]{0.400pt}{4.818pt}}
\put(345.0,131.0){\rule[-0.200pt]{0.400pt}{2.409pt}}
\put(345.0,579.0){\rule[-0.200pt]{0.400pt}{2.409pt}}
\put(412.0,131.0){\rule[-0.200pt]{0.400pt}{2.409pt}}
\put(412.0,579.0){\rule[-0.200pt]{0.400pt}{2.409pt}}
\put(459.0,131.0){\rule[-0.200pt]{0.400pt}{2.409pt}}
\put(459.0,579.0){\rule[-0.200pt]{0.400pt}{2.409pt}}
\put(496.0,131.0){\rule[-0.200pt]{0.400pt}{2.409pt}}
\put(496.0,579.0){\rule[-0.200pt]{0.400pt}{2.409pt}}
\put(526.0,131.0){\rule[-0.200pt]{0.400pt}{2.409pt}}
\put(526.0,579.0){\rule[-0.200pt]{0.400pt}{2.409pt}}
\put(551.0,131.0){\rule[-0.200pt]{0.400pt}{2.409pt}}
\put(551.0,579.0){\rule[-0.200pt]{0.400pt}{2.409pt}}
\put(573.0,131.0){\rule[-0.200pt]{0.400pt}{2.409pt}}
\put(573.0,579.0){\rule[-0.200pt]{0.400pt}{2.409pt}}
\put(593.0,131.0){\rule[-0.200pt]{0.400pt}{2.409pt}}
\put(593.0,579.0){\rule[-0.200pt]{0.400pt}{2.409pt}}
\put(610.0,131.0){\rule[-0.200pt]{0.400pt}{4.818pt}}
\put(610,90){\makebox(0,0){ 10}}
\put(610.0,569.0){\rule[-0.200pt]{0.400pt}{4.818pt}}
\put(724.0,131.0){\rule[-0.200pt]{0.400pt}{2.409pt}}
\put(724.0,579.0){\rule[-0.200pt]{0.400pt}{2.409pt}}
\put(791.0,131.0){\rule[-0.200pt]{0.400pt}{2.409pt}}
\put(791.0,579.0){\rule[-0.200pt]{0.400pt}{2.409pt}}
\put(838.0,131.0){\rule[-0.200pt]{0.400pt}{2.409pt}}
\put(838.0,579.0){\rule[-0.200pt]{0.400pt}{2.409pt}}
\put(875.0,131.0){\rule[-0.200pt]{0.400pt}{2.409pt}}
\put(875.0,579.0){\rule[-0.200pt]{0.400pt}{2.409pt}}
\put(905.0,131.0){\rule[-0.200pt]{0.400pt}{2.409pt}}
\put(905.0,579.0){\rule[-0.200pt]{0.400pt}{2.409pt}}
\put(930.0,131.0){\rule[-0.200pt]{0.400pt}{2.409pt}}
\put(930.0,579.0){\rule[-0.200pt]{0.400pt}{2.409pt}}
\put(952.0,131.0){\rule[-0.200pt]{0.400pt}{2.409pt}}
\put(952.0,579.0){\rule[-0.200pt]{0.400pt}{2.409pt}}
\put(972.0,131.0){\rule[-0.200pt]{0.400pt}{2.409pt}}
\put(972.0,579.0){\rule[-0.200pt]{0.400pt}{2.409pt}}
\put(989.0,131.0){\rule[-0.200pt]{0.400pt}{4.818pt}}
\put(989,90){\makebox(0,0){ 100}}
\put(989.0,569.0){\rule[-0.200pt]{0.400pt}{4.818pt}}
\put(231.0,131.0){\rule[-0.200pt]{0.400pt}{110.332pt}}
\put(231.0,131.0){\rule[-0.200pt]{182.602pt}{0.400pt}}
\put(989.0,131.0){\rule[-0.200pt]{0.400pt}{110.332pt}}
\put(231.0,589.0){\rule[-0.200pt]{182.602pt}{0.400pt}}
\put(30,360){\makebox(0,0){\rotatebox{90}{Number of steps}}}
\put(610,29){\makebox(0,0){$\lambda$}}
\put(513,569){\makebox(0,0)[r]{Gibbs}}
\put(231,254){\makebox(0,0){$\ast$}}
\put(345,312){\makebox(0,0){$\ast$}}
\put(496,332){\makebox(0,0){$\ast$}}
\put(610,397){\makebox(0,0){$\ast$}}
\put(724,464){\makebox(0,0){$\ast$}}
\put(875,505){\makebox(0,0){$\ast$}}
\put(989,550){\makebox(0,0){$\ast$}}
\put(583,569){\makebox(0,0){$\ast$}}
\put(513,528){\makebox(0,0)[r]{Dyer-Greenhill}}
\put(231,195){\makebox(0,0){$\bullet$}}
\put(345,229){\makebox(0,0){$\bullet$}}
\put(496,254){\makebox(0,0){$\bullet$}}
\put(610,301){\makebox(0,0){$\bullet$}}
\put(724,314){\makebox(0,0){$\bullet$}}
\put(875,314){\makebox(0,0){$\bullet$}}
\put(989,314){\makebox(0,0){$\bullet$}}
\put(583,528){\makebox(0,0){$\bullet$}}
\sbox{\plotpoint}{\rule[-0.400pt]{0.800pt}{0.800pt}}%
\sbox{\plotpoint}{\rule[-0.200pt]{0.400pt}{0.400pt}}%
\put(513,487){\makebox(0,0)[r]{Oracle skipping}}
\sbox{\plotpoint}{\rule[-0.400pt]{0.800pt}{0.800pt}}%
\put(231,191){\makebox(0,0){$\circ$}}
\put(345,218){\makebox(0,0){$\circ$}}
\put(496,239){\makebox(0,0){$\circ$}}
\put(610,266){\makebox(0,0){$\circ$}}
\put(724,291){\makebox(0,0){$\circ$}}
\put(875,286){\makebox(0,0){$\circ$}}
\put(989,294){\makebox(0,0){$\circ$}}
\put(583,487){\makebox(0,0){$\circ$}}
\sbox{\plotpoint}{\rule[-0.200pt]{0.400pt}{0.400pt}}%
\put(231.0,131.0){\rule[-0.200pt]{0.400pt}{110.332pt}}
\put(231.0,131.0){\rule[-0.200pt]{182.602pt}{0.400pt}}
\put(989.0,131.0){\rule[-0.200pt]{0.400pt}{110.332pt}}
\put(231.0,589.0){\rule[-0.200pt]{182.602pt}{0.400pt}}
\end{picture}
  \caption{Number of events generated by CFTP algortihms for the Barabasi-Albert model with $100$ vertices for different values of $\lambda$.}
  \label{fig:barabasi}
\end{figure}

Similarly to the star graph, Dyer-Greenhill and oracle skipping
samplers outperform the Gibbs sampler, and the oracle skipping sample
is sensitively better than the Dyer-Greenhill one. For large values,
those two samplers are not sensitive to $\lambda$ (or slightly improve
when $\lambda$ grows).

\bigskip

\section{Conclusions}
The main contribution of the paper is Algorithm 3, that speeds up the Markovian dynamics in the CFTP scheme for exact sampling from the stationary distribution of a Markov chain. 
We illustrated it here on the problem of randomly generating independent sets, but its applicability is much broader, within the context of the random generation of combinatorial objects using Glauber dynamics, or elsewhere. The application to the simulation of queueing networks is an ongoing research.


\bibliographystyle{abbrv}
\bibliography{biblio}

\end{document}